\newcommand{\ind}[1]{{\mathbf 1}_{\{#1\}}}
\def\C{\mathbb{C}}
\def\N{\mathbb{N}}
\def\E{\mathbb{E}}
\def\P{\mathbb{P}}
\def\R{\mathbb{R}}
\newcommand{\diff}{\mathop{}\mathopen{}\mathrm{d}}
\newtheorem{proposition}{Proposition}
\newtheorem{lemma}{Lemma}[proposition]
\newtheorem{theorem}{Theorem}
\newtheorem{definition}{Definition}
\def\var{\mathrm{var}}
\def\atan{\mathrm{ArcTan}}
\title[Offloading scheme for data centers]{Analysis of an offloading scheme for data centers  in the framework of fog computing}
\author[C. Fricker]{Christine Fricker}
\author[F. Guillemin]{Fabrice Guillemin}
\address[F. Guillemin]{CNC/NCA Orange Labs2, Avenue Pierre Marzin, 22300 Lannion, France}
\email{Fabrice.Guillemin@orange.com}
\author[Ph. Robert]{Philippe Robert}
\author[G. Thompson]{Guilherme Thompson${ }^\dag$}\thanks{$\dag$ G. Thompson's research was supported by Brazilian Government/CAPES grant BEX 13748-13-0}
\address[C. Fricker, Ph. Robert, G. Thompson]{INRIA Paris, 2 rue Simone Iff, CS 42112,
75589 Paris Cedex 12, France}
\email{Philippe.Robert@inria.fr}
\urladdr{http://www-rocq.inria.fr/\~{}robert}
\begin{document}

\begin{abstract}
{In the context of fog computing, we consider a simple case when data centers are installed at the edge of the network and assume that if a request arrives at an overloaded data center, then it is forwarded to a neighboring data center with some probability. Data centers are assumed to have a large number of servers and that  traffic at some of them is causing saturation. In this case the other data centers may help to cope with this saturation regime by  accepting some of the rejected requests. Our aim  is to qualitatively estimate the gain achieved via cooperation between  neighboring data centers. After proving some convergence results, related to the scaling limits of loss systems, for the process describing the number of free servers at both data centers, we show that the performance of the system can be expressed in terms of the invariant distribution of a random walk in the quarter plane. By using and developing existing results in the technical literature, explicit formulas for the blocking rates of such a system are derived. }
\end{abstract}

\maketitle

\section{Introduction}

Cloud computing has become one of the major stakes in the development of information technology by offering the possibility of reserving computing resources online. Commercial offers already exist for customers (residential or business) relying on big data centers like Amazon \nocite{AmazonEC2} or Azure \nocite{Azure} for example. This kind of technology is also relevant for network operators in the framework of network function virtualization, where network functions can be instantiated on data centers instead of dedicated hardware. In this context, there is currently a clear trend to distribute data centers. For network operators, it is possible to instantiate  at the edge of the network  functions which were so far centralized in servers (e.g., mobile core functions). Furthermore, by allocating resources closer to end users, it is expected to offer better quality of experience. Distributing cloud computing resources at the edge of the network is known as fog computing. See~\cite{Bonomi,Shenker,Wood} and~\cite{Rai}.

{Data centers involved in fog computing have  a smaller capacity than those in the case of cloud computing and therefore more subject to congestion. Hence, to reduce the probability of request blocking, fog computing data centers have to collaborate. For instance, when one request cannot be accommodated by one of them, it may be forwarded to another one.}

{A typical example of such a situation is when data centers are located on a logical ring at the edge of the network. See Figure~\ref{FogFig}. A request arriving in an overloaded data center with index $i$,  may be forwarded to a neighboring data center $i{-}1$ or $i{+}1$ with some probability.  Hence,  if the  traffic to a data center is causing saturation, the other data centers may help alleviate this saturation regime. The aim of this paper is of investigating the impact of such a cooperative scheme. In practice, the network could be backed up by a central (bigger) data center at the core of the network but at a price in terms of latency. We will not consider this additional feature here.}

\begin{figure}[ht]
\centering 
	\scalebox{0.3}{\includegraphics{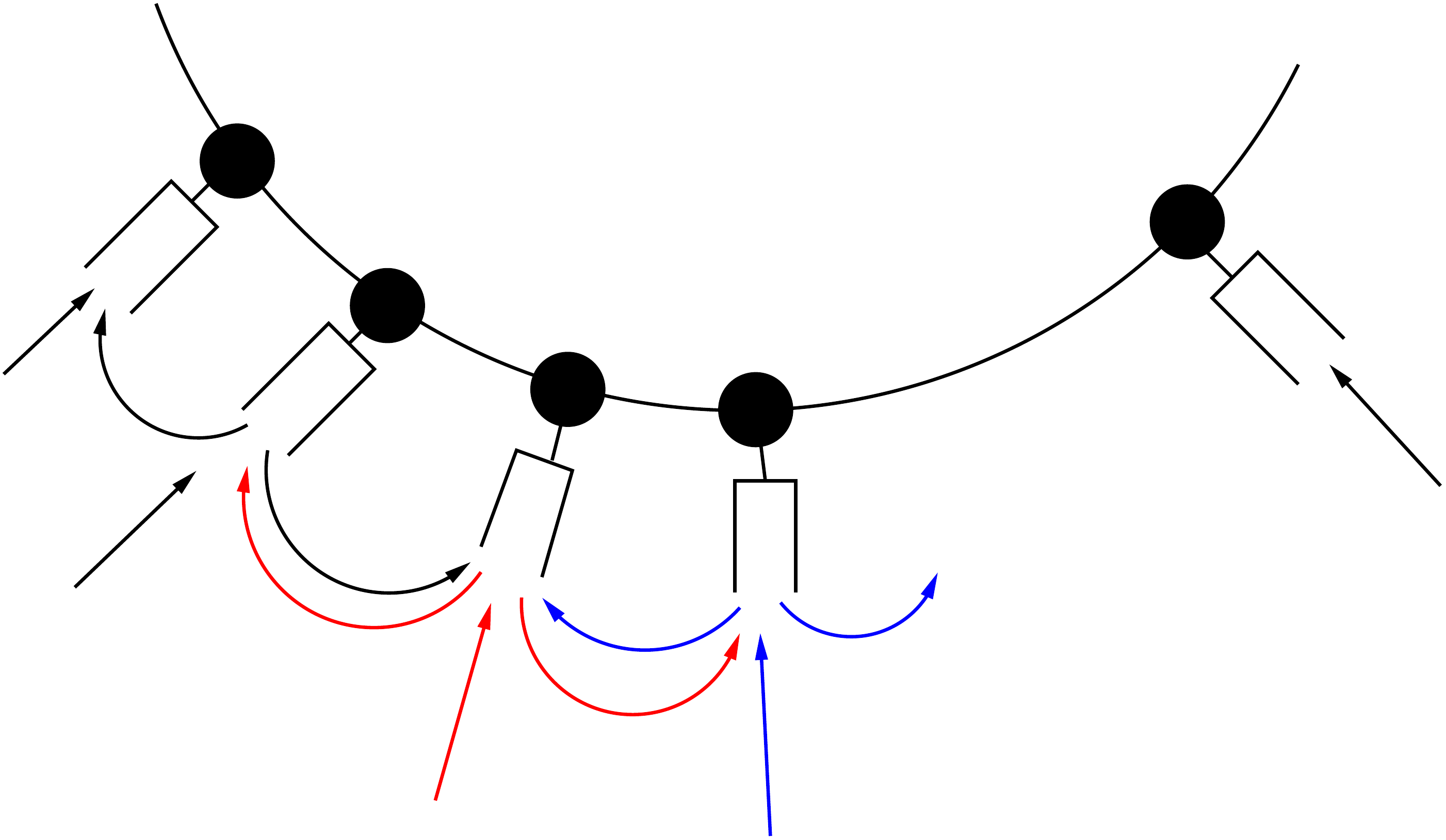}}
\put(-150,120){Data Centers}
\put(-175,100){$i{-}1$}
\put(-148,90){$i$}
\put(-120,85){$i{+}1$}
\put(-180,20){\textcolor{red}{$\lambda_i$}}
\put(-200,35){\textcolor{red}{$p_i$}}
\put(-140,15){\textcolor{red}{$p_i$}}
\caption{A Fog Computing Architecture}\label{FogFig}
\end{figure}

\subsection*{Collaboration of Two Data Centers}
Our aim here is to qualitatively estimate the gain achieved by the collaboration of data centers at the edge of the network. The main part of our analysis will concern the impact of the collaboration of two data centers. It is shown in Section~\ref{ExtSec} that the analysis applies also to more general architectures of fog computing, as in Figure~\ref{FogFig}, provided they are not  congested. 

For $i\in\{1,2\}$, the external arrival process of  requests to data center/facility \#$i$,  referred to as class $i$ requests,  is Poisson with parameter $\lambda_i$. If one of the $C_i$ servers is idle upon arrival, then the request is processed by this data center. Otherwise, if the data center is saturated, i.e., all the $C_i$ servers are busy, then with probability $p_i$ the request is forwarded to the other data center if it is not saturated too, otherwise with probability $1-p_i$ the request is rejected. A request allocated at data center \#$i$ is processed at rate $\mu_i$. 

By considering the number of requests processed at both data centers, this scheme can be clearly represented by a two dimensional Markov process on $\{0,\ldots,C_1\} {\times}\{0,\ldots,C_2\}$. This Markov process, related to loss networks, is not reversible in general and its invariant distribution {does not  have a product form expression.  Even if a numerical analysis of the equilibrium equations is always possible, it is very likely that it will not give precise qualitative and quantitative results concerning  the impact of  rerouting parameters $p_1$ and $p_2$ of the offloading scheme. Our goal is of giving {\em explicit} closed form expressions of the equilibrium probability that a request is rejected, see Theorem~\ref{TheoLoss} which is our main result in this domain. }

To overcome the difficulty of not having an explicit expression of the equilibrium, we have chosen to study a scaled version of this network. The input rates $\lambda_1$, $\lambda_2$ and the capacities $C_1$, $C_2$ are assumed to be proportional to a large parameter $N$ which goes to infinity. {This scaling has been  introduced by Kelly in the context of loss networks, see~\cite{Kelly}. As it will be seen, there is a relation  between the parameters (see Condition~(E) below), which implies that both data centers can be saturated with positive probability. We will focus mainly on this case which is, in our view, the most interesting situation to assess the benefit of offloading mechanisms in a congested environment. Otherwise, the situation is much  simpler.  One of the data centers will be underloaded, so that the rejection rate at equilibrium will  converge to $0$ as $N$ gets large, in particular external arrivals to this data center and the rerouted jobs from the other data center will be accepted with probability $1$ in the limit. See Proposition~\ref{P1} and Theorem~\ref{T1}. }

In this limiting regime we prove convergence results for the process describing the number of free servers at both data centers in the same way as in ~\cite{Hunt} for loss networks. {We show that the invariant distribution of a random walk in the quarter plane is playing a key role in the asymptotic behavior of the loss probabilities at equilibrium. The derivation of the equilibrium is based on the analysis of random walks in  $\N^2$ by~\cite{FayolleIas}.  By taking advantage of the specific characteristics of the random walks considered, we are able to get an explicit expression of the generating function of their invariant distributions in terms of elliptic integrals instead of contour integrals in the complex plane as in~\cite{FayolleIas}. See Theorem~\ref{TheoLoss}. With these results we can then  assess quantitatively the interest of this load balancing mechanism by comparing the respective loss probabilities of the two streams of requests.}

The organization of this paper is as follows: in Section~\ref{model} the stochastic model is introduced and the limit results for the scaling regime are obtained. A family of random walks is shown to play a central role. Section~\ref{LimitRW} establishes the functional relation satisfied by the generating function of the invariant measure of one of these random walks. Section~\ref{BoundValue} gives an explicit representation of this generating function in Theorem~\ref{TheoLoss} and therefore of the performance metrics of the load balancing mechanism. Section~\ref{App} presents some numerical examples of these results. Concluding remarks are presented in Section~\ref{conclusion}.

\section{Model description}\label{model}

\subsection{Model} We consider in this paper two processing facilities in parallel. The first one is equipped with $C_1$ servers and serve requests (for computing resources) arriving according to a Poisson process with rate $\lambda_1$; each request requires an exponentially distributed service time with mean $1/\mu_1$ (a request if accepted occupies a single server). Similarly, the second processing facility is equipped with $C_2$ servers and serves service requests arriving according to  a Poisson process with rate $\lambda_2$; service times are exponentially distributed with mean $1/\mu_2$.

To reduce the blocking probability, we assume that requests arriving at a service facility with no available servers are forwarded to the other one with a given probability. More precisely, if a request arrives at service facility \#1 with no available servers, the request is forwarded to the other service facility with probability $p_1$. Similarly, a request arriving at facility \#2 with no available servers is forwarded to the other facility with probability $p_2$. See Figure~\ref{fignet}.
\begin{figure}[h]
\centering
	\scalebox{0.35}{\includegraphics{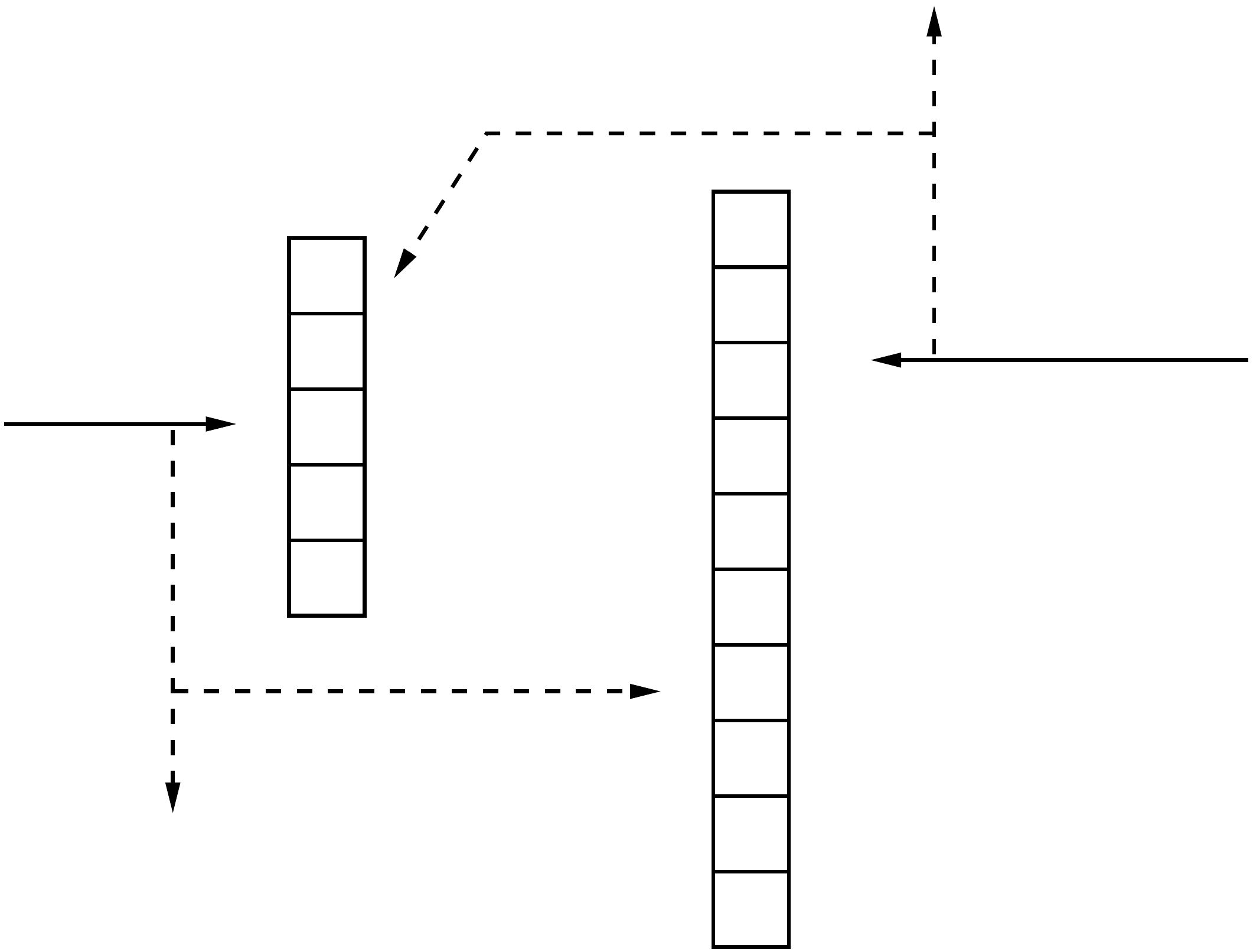}}
	\put(-165,137){\#$1$}
	\put(-165,60){$C_1$}
	\put(-230,90){$\lambda_1$}
	\put(-150,35){$p_1$}
	\put(-210,30){$1{-}p_1$}	
	\put(-90,120){$1$}
	\put(-162,113){$1$}
	\put(-260,60){if \#$1$ saturated}
	\put(-92,150){\#$2$}		
	\put(-92,4){$C_2$}
	\put(10,100){$\lambda_2$}
	\put(-120,145){$p_2$}
	\put(-50,155){$1{-}p_2$}
	\put(-90,107){$2$}
	\put(-162,100){$2$}
	\put(-50,125){if \#$2$ saturated}
	\caption{Load Balancing between Two Data Centers}\label{fignet}
\end{figure}

Let $L_1(t)$ and $L_2(t)$ denote the number of occupied servers in facilities \#1 and \#2 at time $t$, respectively. Owing to the Poisson and exponential service time assumptions, $(L(t))= ((L_1(t), L_2(t)))$ is a Markov process with values in the set $\{0, \ldots, C_1\}\times \{0,\ldots, C_2\}$, and  
 transitions from $(\ell_1,\ell_2)$ to $(\ell_1+i,\ell_2+j)$ occurring at rate
$$ \begin{cases}
	(\lambda_1+p_2 \lambda_2\ind{\ell_2=C_2})\cdot\ind{\ell_1<C_1}& \text{ if } (i,j)=(1,0),\\
	(p_1 \lambda_1\ind{\ell_1=C_1}+ \lambda_2) \cdot \ind{\ell_2<C_2} & \text{ if } (i,j)=(0,1),\\
	\mu_1\ell_1 & \text{ if } (i,j)=(-1,0),\\
	\mu_2 \ell_2& \text{ if } (i,j)=(0,-1)
\end{cases} $$
and $0$ otherwise. 

The equilibrium characteristics of this Markov process on a finite state space, like loss probabilities, do not seem to have closed form expressions in general. A scaling approach is used in the following to get some insight on the performance of such a strategy. We first introduce a random walk in $\N^2$. 

\subsection{A random walk in the extended positive quadrant}
We now consider the following random walk in the extended positive quadrant.

\begin{definition}\label{defim}
For fixed $l=(l_1,l_2)\in\R_+^2$, one defines the random walk $(\overline{m}_l(t))$ on $(\N\cup\{+\infty\})^2$ as follows:  the transition from $(m_1,m_2)$ to $ (m_1{+}a,m_2{+}b)$ occurs at rate
\begin{equation}\label{ratem}
	\begin{cases}
		\mu_1l_1 & \mbox{ if } (a,b)=(1,0),\\
		\mu_2l_2 & \mbox{ if } (a,b)=(0,1),\\
		\lambda_1+p_2\lambda_2 \ind{m_2=0} & \mbox{ if } (a,b)=(-1,0) \text{ and } m_1>0,\\
		\lambda_2 +p_1\lambda_1 \ind{m_1=0} & \mbox{ if } (a,b)=(0,-1) \text{ and } m_2>0,
	\end{cases}
\end{equation}
for $(m_1,m_2)\in(\N\cup\{+\infty\})^2$ with the convention that $+\infty\pm x=+\infty$ for $x\in\N$ (see Figure~\ref{fig_transitions_ml}).
\end{definition}

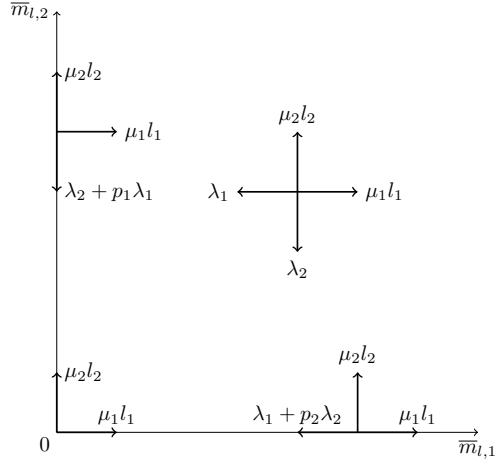
\begin{figure}
\centering
\scalebox{0.8}{\begin{tikzpicture}
\draw[->]
  (0,0) -- (7,0) node[below] {$\overline{m}_{l,1}$};
\draw[->]
  (0,0) -- (0,7) node[left] {$\overline{m}_{l,2}$};

\node at (-.2,-.2) {$0$};

\draw[->, thick]
	(0,5) -- ++	(0,1)	node[right]	{$\mu_{2} l_{2}$};
\draw[->, thick]
	(0,5) -- ++	(0,-1)	node[right]	{$\lambda_{2} + p_{1} \lambda_{1}$};  
\draw[->, thick]
	(0,5) -- ++	(1,0)	node[right]	{$\mu_{1} l_{1}$};  

\draw[->, thick]
	(5,0) -- ++	(1,0)	node[above]	{$\mu_{1} l_{1}$};
\draw[->, thick]
	(5,0) -- ++	(-1,0)	node[above]	{$\lambda_{1} + p_{2} \lambda_{2}$};  
\draw[->, thick]
	(5,0) -- ++	(0,1)	node[above]	{$\mu_{2} l_{2}$};  

\draw[->, thick]
	(0,0) -- ++	(1,0)	node[above]	{$\mu_{1} l_{1}$};
\draw[->, thick]
	(0,0) -- ++	(0,1)	node[right]	{$\mu_{2} l_{2}$};  

\draw[<->, thick]
	(3,4)	node[left]	{$\lambda_{1}$} --	(5,4)	node[right]	{$\mu_{1} l_{1}$};
\draw[<->, thick]
	(4,3)	node[below]	{$\lambda_{2}$} --	(4,5)	node[above]	{$\mu_{2} l_{2}$};
\end{tikzpicture}}
\caption{Transitions for $\bar{m}_l(t)$.}
       \label{fig_transitions_ml}
\end{figure}

In particular $(+\infty,+\infty)$ is an absorbing point for the process $(\overline{m}_l(t))$. 
The random walk $(\overline{m}_l(t))$ is a special case of the Markov process investigated in~\cite{FayolleIas}. 

The following result summarizes the stability properties of this random walk. Critical cases are omitted. 
\begin{proposition}\label{P1}
For $l=(l_1,l_2)\in\R_+^2$,
\begin{enumerate}[(i)]
\item If one of the conditions 
\begin{align*}
&a)\;  \lambda_2 < \mu_2 l_2 \text{ and }
		\lambda_1p_1 + \lambda_2 > \mu_1 l_1 p_1 + \mu_2 l_2 \\
&b) \;		 \lambda_1 < \mu_1 l_1 \text{ and }
		\lambda_1 + \lambda_2 p_2 > \mu_1 l_1 + \mu_2 l_2 p_2\\
&c) \;		\lambda_1 > \mu_1 l_1 \text{ and }\lambda_2 > \mu_2 l_2
\end{align*}
holds then the Markov process $(\overline{m}_l(t))$ is ergodic on $\N^2$. In this case the unique invariant distribution on $\N^2$ is denoted by $\pi_l$. 

\item If $\lambda_1< \mu_1l_1 \text{ and } \lambda_2 <\mu_2l_2$,
the unique invariant distribution of $(\overline{m}_l(t))$ on the extended state space $(\N\cup\{+\infty\})^2$ is the Dirac measure $\delta_{(\infty,\infty)}$.
\item If $\lambda_1> \mu_1l_1,\; \lambda_2 < \mu_2 l_2\text{ and } \lambda_1p_1+\lambda_2 < p_1\mu_1l_1+\mu_2l_2$, the unique invariant distribution of $(\overline{m}_l(t))$ on $(\N\cup\{+\infty\})^2$ is $G_{\delta_1}\otimes \delta_{\infty}$, where $G_{\delta_1}$ is the geometric distribution with parameter $\delta_1=\mu_1l_1/\lambda_1$. 
\item If $\lambda_2> \mu_2l_2,\; \lambda_1 < \mu_1 l_1 \text{ and } \lambda_2p_2+\lambda_1 < p_2\mu_2l_2+\mu_1l_1$, the unique invariant distribution of $(\overline{m}_l(t))$ on $(\N\cup\{+\infty\})^2$ is $ \delta_{\infty}\otimes G_{\delta_2}$, where $G_{\delta_2}$ is the geometric distribution with parameter $\delta_2=\mu_2l_2/\lambda_2$.
\end{enumerate}
\end{proposition}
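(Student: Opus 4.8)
## Proof Strategy for Proposition~\ref{P1}

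**The plan is to** reduce the stability analysis of $(\overline{m}_l(t))$ to a collection of one-dimensional and quasi-birth-death comparisons, exploiting the special structure of the transition rates: the jumps $(+1,0)$ and $(0,+1)$ occur at the \emph{constant} rates $\mu_1 l_1$ and $\mu_2 l_2$ everywhere, while the downward drift rates change only on the two boundary axes. First I would treat the ``bulk'' behaviour in the interior of $\N^2$: away from the axes the walk behaves like a random walk with constant drift vector $(\mu_1 l_1-\lambda_1,\ \mu_2 l_2-\lambda_2)$. If both coordinates of this drift are negative (case (ii), $\lambda_1<\mu_1 l_1$ and $\lambda_2<\mu_2 l_2$), the walk is pushed monotonically toward $(+\infty,+\infty)$; a direct coupling with two independent asymmetric random walks on $\N$, each transient to $+\infty$, shows the process is absorbed at $(\infty,\infty)$ almost surely, and since this is the only closed communicating class the unique invariant distribution is $\delta_{(\infty,\infty)}$. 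Conversely, if $\lambda_1>\mu_1 l_1$ and $\lambda_2>\mu_2 l_2$ (case (i.c)) the interior drift points strictly toward the origin in both coordinates, and a quadratic Lyapunov function $V(m)=m_1+m_2$ (or $V(m)=m_1^2+m_2^2$) has uniformly negative drift outside a finite set, giving ergodicity on $\N^2$ by the Foster--Lyapunov criterion.

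**The more delicate cases** are the ``mixed'' regimes (i.a), (i.b), (iii), (iv), where one interior drift component is stabilizing and the other is not, so the reflecting/absorbing behaviour on the boundary axes is decisive. Consider (iii): $\lambda_1>\mu_1 l_1$ (so the first coordinate is positive-recurrent-like, pulled toward $0$), while $\lambda_2<\mu_2 l_2$ (second coordinate pushed to $+\infty$ in the interior). I would argue that the second coordinate $\overline{m}_{l,2}(t)$ reaches $+\infty$ in finite time with positive probability and, once there, stays there (since $m_2=+\infty$ is absorbing in the second coordinate), after which the first coordinate evolves as a birth-death chain on $\N$ with birth rate $\mu_1 l_1$ and death rate $\lambda_1+p_2\lambda_2\ind{m_2=0}=\lambda_1$ (because $m_2=\infty\neq 0$); this chain is ergodic iff $\mu_1 l_1<\lambda_1$, with geometric stationary law of ratio $\delta_1=\mu_1 l_1/\lambda_1$, hence $G_{\delta_1}\otimes\delta_\infty$. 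The extra hypothesis $\lambda_1 p_1+\lambda_2<p_1\mu_1 l_1+\mu_2 l_2$ is exactly what rules out the competing scenario in which the boost $p_1\lambda_1$ to the second coordinate's downward rate along the axis $m_1=0$ is strong enough to keep $m_2$ recurrent; I would verify via a comparison argument that under this condition the walk cannot have an invariant distribution supported (even partially) on $\N^2$, so $G_{\delta_1}\otimes\delta_\infty$ is the unique one on the extended space. Case (iv) is symmetric, and (i.a), (i.b) are the ``reflected'' versions where the boundary boost \emph{is} strong enough to restore ergodicity on $\N^2$ — there I would again produce an explicit linear Lyapunov function $V(m)=\alpha m_1+\beta m_2$ with coefficients $\alpha,\beta>0$ chosen so that the drift is negative both in the interior and along each axis, the two boundary conditions forcing precisely inequalities of the form appearing in (i.a) and (i.b).

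**The main obstacle** I anticipate is not any single case in isolation but the bookkeeping needed to show \emph{uniqueness} of the invariant distribution on the extended state space $(\N\cup\{+\infty\})^2$, and in particular to exclude ``mixed'' stationary measures that put mass both on $\N^2$ and on the line at infinity. The cleanest route is to invoke the classification in~\cite{FayolleIas} for the genuinely two-dimensional part: the walk restricted to $\N^2$ (with the understanding that hitting a coordinate value $+\infty$ is escape to infinity) is one of their homogeneous random walks in the quarter plane, whose ergodicity region is cut out by explicitly computable drift inequalities — and one checks that conditions (a), (b), (c) are exactly the union of the three sub-regions of that ergodicity region (interior drift toward origin; or one stabilizing component plus sufficiently strong boundary reflection on the relevant axis). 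For the non-ergodic regimes one shows the escape direction is deterministic, which pins down the limiting behaviour and hence the unique stationary law on the compactified space. I would also need to confirm that the ``critical'' boundary cases genuinely excluded by the strict inequalities are the only gaps, so that (i)--(iv) partition the non-critical parameter set; this is a routine but careful case check on the finitely many sign patterns of the four quantities $\lambda_i-\mu_i l_i$ and the two combinations $\lambda_1 p_1+\lambda_2-p_1\mu_1 l_1-\mu_2 l_2$ and its symmetric counterpart.
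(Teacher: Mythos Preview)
Your approach is essentially that of the paper: invoke the Fayolle--Iasnogorodski classification (equivalently, linear Lyapunov functions) for the ergodicity region in (i), and use $M/M/1$ couplings for the transient regimes (ii)--(iv). One point to correct: in case (iii) you write that the second coordinate ``reaches $+\infty$ in finite time with positive probability and, once there, stays there,'' but from any finite state the chain moves by unit steps and never actually hits $+\infty$. The paper's argument avoids this: it observes that each coordinate, as long as it is positive, behaves as an $M/M/1$ queue with arrival rate $\mu_i l_i$ and service rate $\lambda_i$; under the hypotheses of (iii) the second of these is transient, so after an almost-surely finite last visit to $\{m_2=0\}$ the process stays in $\{m_2>0\}$, and thereafter the first coordinate is exactly the ergodic $M/M/1$ queue with ratio $\delta_1=\mu_1 l_1/\lambda_1$, giving the geometric marginal. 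The invariant measure $G_{\delta_1}\otimes\delta_\infty$ on the compactification is then read off as the distributional limit, not via absorption.
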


\begin{proof}
Due to~\cite{FayolleIas}, see also~\cite[Proposition~9.15]{Robert}, $(\overline{m}_l(t))$ is ergodic if and only if one of the conditions of ($i$)  holds. {As long as it does not hit $0$, the first (resp. second)  coordinate of $(\overline{m}_l(t))$ behaves as an $M/M/1$ queue with arrival rate $\mu_1 l_1$ (resp. $\mu_2 l_2$)  and service rate $\lambda_1$ (resp.  $\lambda_2$). Under the conditions of~({\em ii}), each of these  $M/M/1$ queues is transient, in particular starting from $1$, it has a positive probability of not returning to $0$. This implies that after some random time, the process $(\overline{m}_l(t))$ stays in the interior of the quadrant $\N^2$ and therefore  behaves as a couple of independent transient $M/M/1$ queues. Consequently, both coordinates of $(m_l(t))$ are converging in distribution to $+\infty$.} Similarly, for ({\em iii}) and ({\em iv}), the process $(\overline{m}_l(t))$ can be coupled to two queues, the first one, an $M/M/1$ queue which is transient and the second one, an ergodic $M/M/1$ queue, with an invariant distribution which is geometrically distributed. 
\end{proof}

\subsection{Heavy Traffic Scaling Regime}
We investigate now the case when some of the parameters of the processing facilities are scaled up by a factor $N\in\N$. The arrival rates are given by $\lambda_1 N$ and $\lambda_2 N$ with $\lambda_1>0$ and $\lambda_2>0$. Similarly the capacities are given by $C_1^N = N c_1$ and $C_2^N= N c_2$ for some positive constants $c_1$ and $c_2$. To indicate the dependence of the numbers of idle servers upon $N$, an upper index $N$ is added to the stochastic processes. A similar approach has been used in~\cite{Hajek} to study a load balancing scheme in an Erlang system. 

We will consider the process 
\begin{equation}\label{mfree}
(m^N(t)) = (C_1^N-L_1^N(t),C_2^N-L_2^N(t))
\end{equation}
describing the number of idle servers in both processing facilities. As it will be seen, the random walks $(\overline{m}_l(t))$, $l\in\R_+^2$,  play an important role in the asymptotic behavior of $(m^N(t))$ as $N$ goes to infinity.

\begin{theorem}\label{T1}
If one of the following conditions
\begin{equation}\tag{E}
	\begin{cases}
		 \lambda_2 < \mu_2 c_2, \\
		\lambda_1p_1 + \lambda_2 > \mu_1 c_1 p_1 + \mu_2 c_2,
	\end{cases}
	\begin{cases}
		 \lambda_1 < \mu_1 c_1,\\
		\lambda_1 + \lambda_2 p_2 > \mu_1 c_1 + \mu_2 c_2 p_2,
	\end{cases}
\text{ or }
\begin{cases}
		 \lambda_1 > \mu_1 c_1,\\
		\lambda_2 > \mu_2 c_2
	\end{cases}
\end{equation}
holds, and if the initial conditions are such that $m^N(0)=m\in\N^2$ and
$$ \lim_{N\to+\infty} \left(\frac{L_1^N(0)}{N},\frac{L_2^N(0)}{N}\right)=c=(c_1,c_2) $$
then, for the convergence in distribution,
$$ \lim_{N\to+\infty} \left(\frac{L_1^N(t)}{N},\frac{L_2^N(t)}{N},\int_0^t f(m^N(u))\,\diff u\right)
=\left(c_1,c_2,t \int_{\N^2} f(x)\pi_c(\diff x)\right) $$
for any function $f$ with finite support on $\N^2$, $\pi_c$ is the invariant distribution of the process $(\overline{m}_c(t))$ defined previously.
\end{theorem}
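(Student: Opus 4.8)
The plan is to prove this by stochastic averaging, following the method of Hunt and Kurtz for loss networks~\cite{Hunt}: the fast variable is $(m^N(t))$, which takes values in the \emph{bounded} set $\{0,\dots,C_1^N\}\times\{0,\dots,C_2^N\}$, and the slow variable is $(L_1^N(t)/N,L_2^N(t)/N)=c-m^N(t)/N$. A simplification is that $(m^N(t))$ is an \emph{autonomous} Markov process, all transition rates of $(L^N(t))$ being functions of $L^N=C^N-m^N$ only. Writing $e_1=(1,0)$, $e_2=(0,1)$ and denoting by $\mathcal A_c$ the generator of $(\overline m_c(t))$, a direct computation shows that, for $g\colon\N^2\to\R$ with finite support, $m$ in a fixed finite set containing its support and $N$ large,
\[
\mathcal A^N g(m)=N\,\mathcal A_c g(m)+R^N_g(m),\qquad R^N_g(m)=-\mu_1 m_1\bigl(g(m{+}e_1){-}g(m)\bigr)-\mu_2 m_2\bigl(g(m{+}e_2){-}g(m)\bigr),
\]
because the boundary terms $\ind{m_j=0}$ in the down-rates coincide for $m^N$ and $\overline m_c$, the only gap being the birth rates $\mu_i(C_i^N-m_i)$ versus $N\mu_i c_i$. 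Since $R^N_g$ is bounded uniformly in $N$ on its (fixed, finite) support, Dynkin's formula gives $\int_0^t \mathcal A_c g(m^N(u))\,\diff u\to0$ in $L^2$ as $N\to\infty$ (the boundary term of Dynkin's formula and $\tfrac1N\int_0^t R^N_g(m^N(u))\diff u$ are $O(1/N)$, the rescaled martingale is $O(N^{-1/2})$ since its predictable quadratic variation is $O(N)$).

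Next, introduce the occupation measures $\Gamma^N(\diff u,\diff m)=\diff u\,\delta_{m^N(u)}(\diff m)$, which form a tight family on $[0,t]\times(\N\cup\{+\infty\})^2$ by compactness. The key step, and the main obstacle, is to rule out escape of mass to infinity, i.e.\ to show $\Gamma([0,t]\times A_\infty)=0$ for any limit $\Gamma$, where $A_\infty=(\N\cup\{+\infty\})^2\setminus\N^2$. This is where Condition~(E) is used: it is precisely the ergodicity condition of Proposition~\ref{P1}(i) for $l=c$, and the analysis of quarter-plane walks of~\cite{FayolleIas} behind Proposition~\ref{P1} provides a Lyapunov function $V$ for $(\overline m_c(t))$, which may be taken coordinate-wise nondecreasing, with negative drift off a finite set. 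Then $R^N_V(m)=-\mu_1 m_1(V(m{+}e_1){-}V(m))-\mu_2 m_2(V(m{+}e_2){-}V(m))\le0$, so the extra drift of $m^N$ only reinforces the negative one, $\mathcal A^N V\le N\mathcal A_c V$ off that finite set, and a Foster--Lyapunov estimate uniform in $N$ yields $\sup_N \tfrac1t\int_0^t\P(|m^N(u)|>K)\,\diff u\to0$ as $K\to\infty$. Combined with the first step, any subsequential limit disintegrates as $\Gamma(\diff u,\diff m)=\diff u\,\gamma_u(\diff m)$ with $\gamma_u$ a probability measure on $\N^2$ satisfying $\int\mathcal A_c g\,\diff\gamma_u=0$ for all finite-support $g$; hence $\gamma_u$ is invariant for $(\overline m_c(t))$, so $\gamma_u=\pi_c$ for a.e.\ $u$ by the uniqueness part of Proposition~\ref{P1}(i).

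It remains to identify the slow limit. Balancing accepted arrivals and departures at facility~\#1 and compensating,
\[
\frac{L_1^N(t)-L_1^N(0)}{N}=\int_0^t\Bigl(\lambda_1\ind{m_1^N(u)>0}+p_2\lambda_2\ind{m_1^N(u)>0,\,m_2^N(u)=0}-\mu_1 c_1+\tfrac{\mu_1 m_1^N(u)}{N}\Bigr)\diff u+\frac{\widetilde M_1^N(t)}{N},
\]
with $\widetilde M_1^N(t)/N\to0$ and, by the tightness of $m^N$, $\tfrac1N\int_0^t\mu_1 m_1^N(u)\diff u\to0$, and similarly for~\#2. Passing to the limit along a subsequence realizing $\Gamma^N\to\Gamma$ — the uniform integrability needed for the non-finitely-supported indicators $\ind{m_1=0}$, $\ind{m_1>0}$ coming again from the tightness of $m^N$ by truncation — the integrand converges to $\lambda_1\pi_c(m_1>0)+p_2\lambda_2\pi_c(m_1>0,m_2=0)-\mu_1 c_1$, which equals $0$: this is exactly the stationary balance equation for $\pi_c$ between the upward rate $\mu_1 c_1$ and the downward rate $(\lambda_1+p_2\lambda_2\ind{m_2=0})\ind{m_1>0}$ of the first coordinate of $(\overline m_c(t))$. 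Hence $L_i^N(t)/N\to c_i$, the slow limit is the constant $c$, so the subsequential limit $\Gamma$ is unique and equal to $\diff u\,\pi_c(\diff m)$; in particular $\int_0^t f(m^N(u))\,\diff u\to t\int_{\N^2}f\,\diff\pi_c$ for every finite-support $f$, and joint convergence of the triple is immediate since all limits are deterministic.

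The genuinely delicate point is the uniform-in-$N$ tightness of the fast variable away from a finite set, used above: it amounts to transporting the quarter-plane stability estimates behind Proposition~\ref{P1} to the prelimit chains, the favourable feature being that the discrepancy between the rates of $m^N$ and those of the $N$-accelerated walk $(\overline m_c(t))$ is concentrated in the birth rates $\mu_i(C_i^N-m_i)$, which are \emph{smaller} than $N\mu_i c_i$ and thus only push $m^N$ towards the origin.
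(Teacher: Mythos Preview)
Your argument is correct and takes a genuinely different route from the paper's. The paper first invokes the Hunt--Kurtz framework as a black box to obtain the slow/fast decomposition
\[
\left(\frac{L^N(t)}{N},\int_0^t f(m^N(u))\,\diff u\right)\longrightarrow\left(l(t),\int_0^t\!\!\int f\,\diff\pi_{l(u)}\,\diff u\right),
\]
and then identifies $l(t)\equiv c$ by \emph{queueing couplings}: under the first branch of~(E), $L_1^N$ stochastically dominates an overloaded $M/M/C_1^N/C_1^N$ queue, forcing $L_1^N/N\to c_1$; the resulting overflow pushes the effective arrival rate at facility~\#2 above $\mu_2c_2$, and a second coupling with a $G/M/C_2^N/C_2^N$ queue gives $L_2^N/N\to c_2$. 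Only then does $\pi_{l(t)}=\pi_c$ follow. You reverse the order: exploiting that $(m^N(t))$ is autonomous with generator $N\mathcal A_c$ plus a bounded correction, you first obtain the occupation-measure limit $\pi_c$ directly, and only afterwards recover $L^N/N\to c$ from the stationary balance of $\pi_c$. Your route is more self-contained, avoids the auxiliary queues, and makes transparent why the slow limit is constant (it is the equilibrium flux balance of the fast chain, not a separate dynamical fact); the paper's couplings, on the other hand, would transfer more easily to regimes where the slow variable genuinely moves.

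One point deserves tightening: you assert that the analysis behind Proposition~\ref{P1} yields a Lyapunov function for $(\overline m_c(t))$ that is coordinate-wise nondecreasing. This is plausible for the standard piecewise-linear constructions in the quarter plane, but it is not stated in~\cite{FayolleIas} and does not follow from ergodicity alone. Your own closing remark points to the clean fix: since the up-rates $\mu_i(C_i^N-m_i)$ of $m^N$ lie \emph{below} $N\mu_ic_i$ while the down-rates agree, $m^N$ can be monotonically coupled below the $N$-accelerated walk $(\overline m_c(N\,\cdot\,))$ started from the same point; the boundary interactions cooperate with the ordering, because a smaller second coordinate only \emph{raises} the down-rate in the first. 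Ergodicity of $(\overline m_c(t))$ then yields the uniform occupation-time bound $\sup_N\int_0^t\P(|m^N(u)|>K)\,\diff u\to0$ directly, bypassing the Lyapunov function altogether.
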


\begin{proof}
By using the same method as in~\cite{Hunt}, one gets an analogous result to Theorem~3 of this reference. For the convergence in distribution of processes, the relation
\begin{equation}\label{hkcv}
\lim_{N\to+\infty}\left(\frac{L_1^N(t)}{N},\frac{L_1^N(t)}{N},\int_0^t\!\! f(m^N(u))\,\diff u\right)
=\left(l_1(t),l_2(t),\int_0^t\!\!\!\int_{\N^2} f(x)\pi_{l(u)}(\diff x)\diff u\right)
\end{equation}
holds, where $(l(t))=((l_1(t),l_2(t)))$ satisfying the following integral equations
$$
\begin{cases}
l_1(t)&=\displaystyle c_1+\int_0^t \left(\rule{0mm}{4mm}\lambda_1\pi_{l(u)}(\mathcal{A}_1)+p_2\lambda_2\pi_{l(u)}(\mathcal{A}_1\cap \mathcal{A}_2^c)-\mu_1l_1(u)\right)\,\diff u\\
l_2(t)&\displaystyle=c_2+\int_0^t \left(\rule{0mm}{4mm}\lambda_2\pi_{l(u)}(\mathcal{A}_2)+p_1\lambda_1\pi_{l(u)}(\mathcal{A}_2\cap \mathcal{A}_1^c)-\mu_2l_2(u)\right)\,\diff u,
\end{cases}
$$
for $i\in\{1,2\}$, $\mathcal{A}_i=\{m\in\N^2, m_i\not=0\}$ and, for $l\in\R_+^2$, $\pi_{l}$ is the {\em unique } invariant distribution of $(\overline{m}_l(t))$.

Let us assume without loss of generality  that, under condition~(E), for example the first condition of~(E) is satisfied. It will be assumed throughout the paper. It is not difficult to construct a coupling so $L_1^N(t)\geq Q_1^N(t)$ holds almost surely for all $t\geq 0$, where $(Q_1^N(t))$ is the number of jobs of an $M/M/C_1^N/C_1^N$ queue with arrival rate $\lambda_1N$ and service rate $\mu_1$. Since $\lambda_1>\mu_1 c_1$, a classical result, see Section~7 of Chapter~6 of~\cite{Robert} for example, gives the convergence in distribution $$ \lim_{N\to+\infty} \left(\frac{L_1^N(t)}{N}\right)= (c_1),$$ in particular, $(l_1(t))$ is a constant equal to $c_1$. 

If, for $i \in \{1,2\}$, $\mathcal{N}_{\lambda_iN}$ is the Poisson process of arrivals at facility $\#i$, by using the same coupling as before one gets that the number $U_2^N(t)$, arrivals at facility \#2 up to time $t$, satisfies, for $0\leq s\leq t$, $U_2^N(t)-U_2^N(s)\geq \underline{U}^N(t)- \underline{U}^N(s)$, 
$$ \underline{U}^N(t)\stackrel{\text{def.}}{=}\mathcal{N}_{\lambda_2N}[0,t]+\int_0^t \ind{Q_1^N(u-) = C_1^N,B_1(u-)=1}\mathcal{N}_{\lambda_1N}(\diff u), $$ where $(B_1(u), u\geq 0)$ is a family of independent Bernoulli random variables with parameter $p_1$. The lower bound includes the direct arrivals $\mathcal{N}_{\lambda_2N}[0,t]$ to facility $\#2$ and the rejected jobs from $\#1$. One gets that $$L_2^N(t) \geq Q_2^N(t),$$ where $(Q_2^N(t))$ is a $G/M/C_2^N/C_2^N$ queue with the arrival process $(\underline{U}^N(t))$.

The ergodic theorem gives that, almost surely
$$ \lim_{N\to+\infty} \dfrac{\underline{U}^N(t)}{N}=t\left(\lambda_2+\lambda_1p_1\left(1-\frac{\mu_1c_1}{\lambda_1}\right) \right) > \mu_2c_2 t, $$
by condition~(E). By using this relation, one can show that, for the convergence in distribution, the relation $$ \lim_{N\to+\infty} \left(\frac{Q_2^N(t)}{N}\right)=(c_2) $$ holds. In particular $(l_2(t))$ is constant and equal to $c_2$. Therefore, almost surely, $(l(t))=(c)$ holds, hence $\pi_{l(t)} = \pi_c$. Relation~\eqref{hkcv} shows that the theorem is proven.
\end{proof} 

The following proposition states that the performances of the load balancing mechanism can be expressed with the invariant distribution $\pi_c$.

\begin{proposition}
Under Condition~(E), as $N$ goes to infinity, the probability that at equilibrium a job of class $i\in\{1,2\}$ is rejected converges to $$\beta_i= {\pi}_c\left(m\in\N^2, m_i=0\right)(1-p_i)+p_i{\pi}_c\left(0,0\right),$$
where ${\pi}_c$ is the invariant distribution of $(\overline{m}_c(t))$.
\end{proposition}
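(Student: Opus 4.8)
The plan is to reduce the statement, via the PASTA property, to the convergence of the stationary distribution of the idle--server process $m^N$, and then to identify the limit using Theorem~\ref{T1}. For fixed $N$ the process $(L^N(t))$ is irreducible on the finite set $\{0,\dots,C_1^N\}\times\{0,\dots,C_2^N\}$, hence has a unique stationary distribution; let $\bar\nu^N$ denote the law of $m^N=(C_1^N-L_1^N,C_2^N-L_2^N)$ under it. A class-$i$ request is rejected precisely when facility $\#i$ is saturated, i.e.\ $m_i^N=0$, and either the request is not forwarded (probability $1-p_i$) or it is forwarded to the other facility which is itself saturated. Since class-$i$ arrivals form a Poisson process, PASTA gives
\begin{equation*}
\beta_i^N=(1-p_i)\,\bar\nu^N\bigl(\{m_i=0\}\bigr)+p_i\,\bar\nu^N\bigl(\{(0,0)\}\bigr),
\end{equation*}
so it suffices to show that $\bar\nu^N\to\pi_c$ strongly enough that the masses of $\{m_i=0\}$ and of $\{(0,0)\}$ pass to the limit.

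Next I would prove that the family $(\bar\nu^N)$ is tight on $\N^2$, using the same stochastic comparisons as in the proof of Theorem~\ref{T1}. Under the first alternative of Condition~(E) one has $\lambda_1>\mu_1 c_1$, so, starting from stationarity and running over a fixed time window, $L_1^N$ dominates an overloaded $M/M/C_1^N/C_1^N$ queue, whose number of free servers in equilibrium has a geometric-type tail bounded uniformly in $N$; likewise $L_2^N$ dominates a $G/M/C_2^N/C_2^N$ queue fed at long-run rate $\bigl(\lambda_2+\lambda_1 p_1(1-\mu_1 c_1/\lambda_1)\bigr)N>\mu_2 c_2 N$, again with a tight free-server count. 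Hence $m_1^N$ and $m_2^N$ are, under $\bar\nu^N$, stochastically bounded uniformly in $N$, and $(\bar\nu^N)$ is tight.

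To identify the limit, note that since $m^N$ is stationary, for any $f$ with finite support on $\N^2$ and any $t>0$,
\begin{equation*}
\int f\,d\bar\nu^N=\E_{\bar\nu^N}\!\left[\frac1t\int_0^t f(m^N(u))\,du\right]
=\sum_{m\in\N^2}\bar\nu^N(m)\,\E\!\left[\frac1t\int_0^t f(m^N(u))\,du\ \Big|\ m^N(0)=m\right].
\end{equation*}
Conditioning on $\{m^N(0)=m\}$ forces $L^N(0)=(C_1^N-m_1,C_2^N-m_2)$, so $L^N(0)/N\to c$ and Theorem~\ref{T1} applies: the inner conditional expectation converges to $\int f\,d\pi_c$ and is bounded by $\|f\|_\infty$ uniformly in $N$ and $m$. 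Combining this with tightness, which prevents escape of mass, gives $\int f\,d\bar\nu^N\to\int f\,d\pi_c$ for every finitely supported $f$; since $(\bar\nu^N)$ is tight, this upgrades to convergence in total variation, so $\bar\nu^N(A)\to\pi_c(A)$ for every $A\subset\N^2$, in particular for $A=\{m_i=0\}$ and $A=\{(0,0)\}$. Inserting this into the PASTA identity yields $\beta_i^N\to\beta_i$.

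The main obstacle is the identification step. One cannot proceed by passing to the limit termwise in the stationary equations $\E_{\bar\nu^N}[\Omega^N g(m^N)]=0$, because near the saturation boundary $m^N$ has transition rates of order $N$ --- the ``service'' term is $\mu_1(Nc_1-m_1)\bigl(g(m+(1,0))-g(m)\bigr)$, which carries an explicit factor $N$ with no visible cancellation --- so the limiting generator of $(\overline{m}_c(t))$ is not recovered that way. Theorem~\ref{T1} is precisely the device that circumvents this: its occupation-measure statement already performs the averaging of these fast transitions, and the only genuinely new work is the uniform tightness of the stationary laws together with the interchange of the $N\to\infty$ and time-average limits, both of which follow from the stochastic-domination estimates above.
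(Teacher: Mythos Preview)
Your argument follows the same route as the paper: express the equilibrium rejection probability as an expected time average of an indicator (you invoke PASTA; the paper obtains the same identity via the Poisson martingale and stationarity) and then apply the occupation--measure convergence of Theorem~\ref{T1}. The paper's proof stops right there with ``one concludes with the convergence of the previous theorem,'' whereas you make explicit the two points the paper glosses over: Theorem~\ref{T1} is stated only for deterministic initial data $m^N(0)=m$ and only for finitely supported $f$, so some tightness of the stationary laws $\bar\nu^N$ is needed both to justify conditioning on the starting point and to pass from finitely supported test functions to the indicator of $\{m_i=0\}$. That extra step is the right thing to add.

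One caveat on your tightness sketch: the ``likewise'' for $m_2^N$ under the first alternative of~(E) is not quite immediate, because there $\lambda_2<\mu_2 c_2$ and the dominating $G/M/C_2^N/C_2^N$ queue is overloaded only through the overflow from $Q_1^N$; the long-run rate exceeding $\mu_2 c_2 N$ does not by itself give a uniformly tight free-server count when the instantaneous arrival intensity dips below $\mu_2 c_2 N$ whenever $Q_1^N$ is not full. A clean way to finish is a joint drift/Lyapunov argument on the pair $(C_1^N-Q_1^N,\,C_2^N-Q_2^N)$ rather than treating the two coordinates separately.
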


\begin{proof}
Assume that $(L_1^N(t),L_2^N(t))$ is at equilibrium, the number of class~$1$ jobs rejected between $0$ and $t$ is given by
\begin{equation*}
R_1^N(t)=\int_0^t \ind{m_1^N(u-)=0,B_1(u-)=0} \mathcal{N}_{\lambda_1 N}(\diff u) +
\int_0^t \ind{m_1^N(u-)=0,m_2^N(u-)=0, B_1(u-)=1}\mathcal{N}_{\lambda_1 N}(\diff u).
\end{equation*}
The probability of rejecting a class~1 job is hence given by 
$$ \P(m_1^N(0)=0,B_1(0)=0)+\P(m_1^N(0)=0, m_2^N(0)=0,B_1(0)=1)= \frac{\E(R_1^N(1))}{\lambda_1 N}.$$

By using the martingales associated with Poisson processes, one gets
\begin{equation*}
\frac{\E(R_1^N(1))}{\lambda_1 N}=(1-p_1)\E\left(\int_0^1 \ind{m_1^N(u-)=0}\,\diff u \right)+
p_1\E\left(\int_0^1 \ind{m_1^N(u-)=0,m_2^N(u-)=0}\,\diff u \right),
\end{equation*}
one concludes with the convergence of the previous theorem. 
\end{proof}
When condition~(E) is not satisfied, one can obtain an analogous result. Its (elementary) proof is skipped. 
The results on the asymptotic blocking probability of jobs are summarized in the following proposition, where $(A)$, $(B_1)$ and $(B_2)$ are exclusive.
\begin{proposition}\label{othercases}
Let
\begin{equation*}
(A)
		\begin{cases}
		\lambda_1 {<} \mu_1 c_1\\
		\lambda_2 {<} \mu_2 c_2
	\end{cases}
	(B_1)
	\begin{cases}
		\lambda_2 {>} \mu_2 c_2\\
		\lambda_1 {+} \lambda_2 p_2 {<} \mu_1 c_1 {+} \mu_2 c_2 p_2
	\end{cases}
	(B_2)
	\begin{cases}
		\lambda_1 {>} \mu_1 c_1\\
		\lambda_2 {+} \lambda_1 p_1 {<} c_2 \mu_2 {+} \mu_1 c_1 p_1
	\end{cases}
\end{equation*}
then, at equilibrium, the loss probability of a job of class $i\in\{1,2\}$ is converging to $\beta_i$ as $N$ goes to infinity, where
\begin{equation}\label{defbetai}
	\beta_i =
	\begin{cases}
		{\pi}_c\left(m\in\N^2, m_i{=}0\right)(1{-}p_i){+}p_i{\pi}_c(0,0) & \mbox{if (E) holds},\\
		0 & \mbox{if } (A) \; \mbox{or} \; (B_i) \; \mbox{holds},\\
		(1-p_i)\left(1-\mu_ic_i/\lambda_i\right) & \mbox{if} \; (B_{3-i}) \; \mbox{holds.}
	\end{cases}
\end{equation}
\end{proposition}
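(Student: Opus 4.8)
The plan is as follows. The case when $(E)$ holds is exactly the preceding proposition, so only $(A)$, $(B_1)$ and $(B_2)$ have to be treated, and the key point is that the identity obtained in the proof of that proposition does not use $(E)$: at equilibrium the probability that a class~$i$ request is rejected equals
$$ (1-p_i)\,\P\bigl(m_i^N(0)=0\bigr)+p_i\,\P\bigl(m^N(0)=(0,0)\bigr), $$
the probabilities being taken under the stationary distribution of $(m^N(t))$. Everything thus reduces to computing, in each of the three cases, the limits as $N\to\infty$ of $\P(m_i^N(0)=0)$ and of $\P(m^N(0)=(0,0))$.

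The first step is to recognise that $(A)$, $(B_1)$ and $(B_2)$ are precisely the three non-ergodic cases of Proposition~\ref{P1} at $l=c$. Condition $(A)$ is literally case~$(ii)$, so $\pi_c=\delta_{(\infty,\infty)}$. Condition $(B_1)$ forces $\lambda_1<\mu_1c_1$: from $\lambda_1+p_2\lambda_2<\mu_1c_1+p_2\mu_2c_2$ and $\lambda_2>\mu_2c_2$ one gets $\lambda_1<\mu_1c_1+p_2(\mu_2c_2-\lambda_2)<\mu_1c_1$, so $(B_1)$ is case~$(iv)$ and $\pi_c=\delta_\infty\otimes G_{\delta_2}$ with $\delta_2=\mu_2c_2/\lambda_2$; symmetrically $(B_2)$ gives $\lambda_2<\mu_2c_2$, is case~$(iii)$ and $\pi_c=G_{\delta_1}\otimes\delta_\infty$ with $\delta_1=\mu_1c_1/\lambda_1$. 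In passing this shows that $(A)$, $(B_1)$, $(B_2)$ are mutually exclusive and exclusive with $(E)$.

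The second step transfers the informal statement ``$m^N(0)$ behaves like $\pi_c$'' to the stationary law, by the same couplings with Erlang loss systems used to prove Theorem~\ref{T1}. Take $(B_2)$ as the representative case. Sandwiching $L_1^N$ between an $M/M/C_1^N/C_1^N$ queue of arrival rate $\lambda_1N$ and one of rate $(\lambda_1+p_2\lambda_2)N$ whose extra input stream is active only on $\{m_2^N=0\}$ (a set that will turn out to be asymptotically time-negligible because facility~$\#2$ is underloaded), one obtains that the stationary number of free servers of facility~$\#1$ converges in distribution to that of an overloaded ($\lambda_1>\mu_1c_1$) Erlang loss queue; from the product form of the stationary law of $M/M/C/C$ this limit is the geometric law $G_{\delta_1}$, and in particular $\P(m_1^N(0)=0)\to 1-\mu_1c_1/\lambda_1$. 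Facility~$\#1$ then accepts requests at asymptotic rate $\mu_1c_1N$, hence overflows at rate $(\lambda_1-\mu_1c_1)N$ and forwards to $\#2$ at rate $p_1(\lambda_1-\mu_1c_1)N$; by the law of large numbers for the Poisson input processes, facility~$\#2$ receives requests at asymptotic rate $\lambda_2+p_1(\lambda_1-\mu_1c_1)$, which is $<\mu_2c_2$ by the second inequality of $(B_2)$. Therefore $L_2^N$ is dominated by an underloaded Erlang loss queue and $\P(m_2^N(0)\le K)\to 0$ for every fixed $K$, so $\P(m^N(0)=(0,0))\to 0$. Substituting into the identity of the first paragraph gives $\beta_1=(1-p_1)(1-\mu_1c_1/\lambda_1)$ and $\beta_2=0$. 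Case $(B_1)$ is symmetric, and $(A)$ is the simpler situation in which both facilities are underloaded, so $\P(m_i^N(0)\le K)\to 0$ for $i=1,2$ and $\beta_1=\beta_2=0$.

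The main obstacle is this last step: the streams of jobs forwarded between the two facilities are neither Poisson nor independent of the state, so one has to show that in each regime the overflow rate concentrates at its deterministic value and that the stationary number of free servers of an underloaded facility puts no mass on bounded sets in the limit. Both follow, as in the proof of Theorem~\ref{T1}, from monotone couplings with Erlang loss queues together with elementary (large-deviations, or direct Markov-chain) estimates; it is this bookkeeping, and not any conceptual point, that the paper chooses to omit. The substantive content of the statement is the identification with Proposition~\ref{P1} carried out in the first step and the resulting explicit constant $1-\mu_ic_i/\lambda_i$.
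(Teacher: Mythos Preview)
Your proposal is correct and aligned with what the paper intends: the paper itself omits the proof entirely, stating only that ``its (elementary) proof is skipped,'' so there is no detailed argument to compare against. Your sketch supplies precisely the content the authors leave out: the rejection-probability identity from the preceding proposition holds independently of~$(E)$; the regimes $(A)$, $(B_1)$, $(B_2)$ are exactly the non-ergodic cases $(ii)$, $(iv)$, $(iii)$ of Proposition~\ref{P1} at $l=c$; and the stationary limits follow from Erlang-queue couplings of the same flavour as those in the proof of Theorem~\ref{T1}.

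One comment on the $(B_2)$ paragraph: the sandwich you describe for facility~\#1 does not by itself pin down $\P(m_1^N(0)=0)$, since the two bounding Erlang systems (rates $\lambda_1 N$ and $(\lambda_1+p_2\lambda_2)N$) yield different limiting blocking probabilities. You need the additional input that $\P(m_2^N(0)=0)\to 0$ to collapse the sandwich, and establishing that in turn requires controlling the overflow from facility~\#1, so the two facts have to be proved together rather than sequentially. You flag this correctly as the ``main obstacle'' and attribute its resolution to bookkeeping; one clean way to break the apparent circularity is to first use the crude flow-balance bound $\P(m_1^N(0)=0)\le 1-\mu_1 \E[L_1^N]/((\lambda_1+p_2\lambda_2)N)$ together with $\E[L_1^N]/N\to c_1$ (which follows from the lower Erlang coupling alone, since $\lambda_1>\mu_1c_1$), giving an \emph{a priori} upper bound on the overflow into~\#2 that is still $<\mu_2c_2$ under $(B_2)$ once combined with the implied inequality $\lambda_2<\mu_2c_2$. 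From there the argument closes as you describe. This is indeed routine, and your identification of the substantive content---the match with Proposition~\ref{P1} and the explicit constant $1-\mu_ic_i/\lambda_i$ coming from the geometric marginal---is exactly right.
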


{
\subsection{An Extension to Multiple Data Centers}\label{ExtSec}
In this section, it is assumed that there are $J$ data centers, for $1\leq j\leq J$, the $j$th data center has $c_jN$ servers and the external arrivals to it are Poisson with parameter $\lambda_j N$ and services are exponentially distributed with parameter $\mu_j$. If an external request at data center $j$ finds all $c_jN$ servers occupied, it is re-routed to data center $j{+}1$ (with $J{+}1=1$)  or to data center $j{-}1$ (with $0{=}J$) with probability $p_j$, otherwise it is rejected. In particular a job is rerouted with probability $2p_j$ in the case of congestion.  See Figure~\ref{FogFig}.

For  $1\leq j\leq J$, one defines the random walk $(\overline{m}_c^j(t))$ on $\N^2$ as follows: the transition from $(m,n)\in\N^2$ to $ (m{+}a,n{+}b)$ occurs at rate
\[
	\begin{cases}
		\mu_jc_j & \mbox{ if } (a,b)=(1,0),\\
		\mu_{j+1}c_{j+1} & \mbox{ if } (a,b)=(0,1),\\
		\lambda_j+p_{j+1}\lambda_{j+1} \ind{n=0} & \mbox{ if } (a,b)=(-1,0) \text{ and } m>0,\\
		\lambda_{j+1} +p_j\lambda_j \ind{m=0} & \mbox{ if } (a,b)=(0,-1) \text{ and } n>0.
	\end{cases}
\]
If one of the conditions 
\begin{equation}\tag{$E_j$}
\begin{cases}
\lambda_{j}>\mu_jc_j,\\ 
\lambda_{j+1}>\mu_{j+1}c_{j+1},
\end{cases}
\begin{cases}
\lambda_{j+1}<\mu_{j+1}c_{j+1}\\ 
\lambda_{j}{+}\lambda_{j+1}p_{j+1}{>}\mu_{j}c_{j}{+}p_{j+1}\mu_{j+1}c_{j+1},
\end{cases}
\text{ or }
\begin{cases}
\lambda_{j}<\mu_jc_j,\\ 
\lambda_{j+1}{+}\lambda_{j}p_{j}{>}\mu_{j+1}c_{j+1}{+}p_{j}\mu_{j}c_{j},
\end{cases}
\end{equation}
holds, one gets that the associated Markov process is ergodic by Proposition~\ref{P1}, one denotes by $\pi_c^j$ its invariant probability distribution. 
As before, one takes the following convention for the indices, $J{+}1{=}1$ and $1{-}1{=}J$.

We now give a version of the previous proposition in this context. 
\begin{proposition}\label{propExt}
At equilibrium, the loss probability of a job of class $j\in\{1,\ldots,J\}$ is converging to $\beta_j$ as $N$ goes to infinity in the following cases,
\begin{enumerate}
\item No Congestion.\\
If $\lambda_j<\mu_j c_j$ for all $j\in\{1,\ldots,J\}$ then $\beta_j\equiv0$.
\item One saturated node.\\
If, for some $j_0\in\{1,\ldots,J\}$,  $\lambda_{j_0}{>}\mu_{j_0} c_{j_0}$ and if $\lambda_j<\mu_j c_j$ holds for all $j\not=j_0$ and 
\[
\begin{cases}
\lambda_{j_0+1}+\lambda_{j_0}p_{j_0}<\mu_{j_0+1}c_{j_0+1}+p_{j_0}\mu_{j_0}c_{j_0}\\
\lambda_{j_0-1}+\lambda_{j_0}p_{j_0}<\mu_{j_0-1}c_{j_0-1}+p_{j_0}\mu_{j_0}c_{j_0},
\end{cases}
\]
then $\beta_j=0$ if $j\not=j_0$ and $\beta_{j_0}=(1-2p_{j_0})(1-\mu_{j_0}c_{j_0}/\lambda_{j_0})$
\item Two saturated neighboring nodes.\\
If, for some $j_0\in\{1,\ldots,J\}$,  one of the conditions~($E_{j_0}$) holds and $\lambda_j<\mu_j c_j$ holds for all $j\not=j_0, j_0{+}1$ and
\begin{equation}\label{F1}
\begin{cases}
\lambda_{j_0+2}+\lambda_{j_0+1}p_{j_0+1}\pi_c^{j_0}(\N{\times}\{0\})<\mu_{j_0+2}c_{j_0+2}\\
\lambda_{j_0-1}+\lambda_{j_0}p_{j_0}\pi_c^{j_0}(\{0\}{\times}\N)<\mu_{j_0-1}c_{j_0-1}
\end{cases}
\end{equation}
then
\[
\begin{cases}
\beta_{j_0}={\pi}^{j_0}_c\left(\{0\}{\times}\N\right)(1{-}2p_{j_0}){+}p_{j_0}{\pi}_{c}^{j_0}(0,0)\\
\beta_{j_0+1}={\pi}^{j_0}_c\left(\N{\times}\{0\}\right)(1{-}2p_{j_0+1}){+}p_{j_0+1}{\pi}_{c}^{j_0}(0,0).
\end{cases}
\]
\end{enumerate}
\end{proposition}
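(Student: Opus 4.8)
The plan is to reduce Proposition~\ref{propExt} to the two–data–center results already proven, namely Theorem~\ref{T1} and Proposition~\ref{othercases}, by a localization argument. The key observation is that in each of the three cases the ``interesting'' activity is confined to a small set of nodes — the empty set in case~(1), the single node $j_0$ in case~(2), and the pair $\{j_0,j_0{+}1\}$ in case~(3) — and all remaining nodes behave asymptotically as lightly loaded $M/M/\infty$-type systems whose rejection rates vanish. So the first step is to set up, exactly as in the proof of Theorem~\ref{T1}, the fluid/averaging limit for the $J$-dimensional occupancy process $(L_1^N(t),\ldots,L_J^N(t))/N$, obtaining that each coordinate $L_j^N(t)/N$ converges to a deterministic function $(\ell_j(t))$ solving an integral equation analogous to~\eqref{hkcv}, driven by the invariant measures of the local random walks seen by each pair of neighbors, together with the additive-functional convergence $\int_0^t f(m^N(u))\,\diff u\to t\int f\,d\pi$ against the appropriate stationary measure. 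The method of~\cite{Hunt} applies verbatim since the rerouting only couples nearest neighbors and the generator has the same structure.

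Second, I would establish the fixed point $\ell_j(t)\equiv c_j$ for all $j$ in each of the three regimes by the same coupling device used in Theorem~\ref{T1}: lower bound $L_j^N(t)$ by an $M/M/C_j^N/C_j^N$ (or $G/M/C_j^N/C_j^N$) queue fed by the direct Poisson arrivals plus, for the saturated node(s), the rerouted overflow from the congested neighbor(s). In case~(1), every node has $\lambda_j<\mu_j c_j$ and no node ever saturates in the limit, so each $L_j^N/N$ converges to $\lambda_j/\mu_j<c_j$ rather than $c_j$ — here the rejection rate is $0$ simply because $\pi_{\ell(t)}=\delta_{(\infty,\infty)}$-type behavior forces $m_j^N(u)\to+\infty$, so $\ind{m_j^N(u)=0}$ has vanishing time-average; this is the transient-queue part of Proposition~\ref{P1}(ii). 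In case~(2), node $j_0$ saturates and overflows to both neighbors $j_0{\pm}1$; the hypotheses ensure those neighbors can absorb the extra Poisson load $\lambda_{j_0}p_{j_0}$ and still stay subcritical, so they — and hence all other nodes by induction around the ring — remain underloaded with zero limiting rejection, while node $j_0$ behaves as an isolated overloaded $M/M/C/C$ queue rejecting a proportion $(1-2p_{j_0})$ of the overflow, giving $\beta_{j_0}=(1-2p_{j_0})(1-\mu_{j_0}c_{j_0}/\lambda_{j_0})$ exactly as the last line of~\eqref{defbetai} with the factor $2$ accounting for rerouting in both directions. In case~(3), the pair $(j_0,j_0{+}1)$ is precisely a copy of the two-center model under Condition~(E) (with the factor $2p$ replacing $p$ to reflect the two-sided rerouting), its scaled occupancy converges to $(c_{j_0},c_{j_0+1})$ with local invariant law $\pi_c^{j_0}$, and the overflow streams these two nodes emit toward $j_0{-}1$ and $j_0{+}2$ have asymptotic intensities $\lambda_{j_0}p_{j_0}\pi_c^{j_0}(\{0\}{\times}\N)$ and $\lambda_{j_0+1}p_{j_0+1}\pi_c^{j_0}(\N{\times}\{0\})$ respectively; conditions~\eqref{F1} are exactly what is needed so that these extra loads keep nodes $j_0{-}1$ and $j_0{+}2$ subcritical, and then the remaining nodes stay subcritical around the ring.

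Third, with $\ell_j(t)\equiv c_j$ for all $j$ in hand, I would compute $\beta_j$ by the same martingale/Campbell-formula argument as in the proof of the proposition preceding Proposition~\ref{othercases}: the number of class-$j$ jobs rejected in $[0,1]$ at equilibrium is $\int_0^1\ind{m_j^N(u-)=0,\,\text{both rerouting Bernoullis fail}}\,\mathcal{N}_{\lambda_j N}(\diff u)$ plus the term where the job is rerouted but both target nodes are also saturated; dividing by $\lambda_j N$, taking expectations, compensating the Poisson integral, and passing to the limit via the additive-functional convergence yields the stated formulas, with $\pi_c^{j_0}(\{0\}{\times}\N)$ and $\pi_c^{j_0}(\N{\times}\{0\})$ replacing $\pi_c(m_i=0)$ and the coefficient $1-2p_{j_0}$ replacing $1-p_i$. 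For the non-neighbors the indicator $\ind{m_j^N(u)=0}$ has zero time-average (the node is strictly underloaded), so $\beta_j=0$.

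I expect the main obstacle to be the bookkeeping needed to propagate subcriticality around the entire ring: one must check that once the directly affected nodes ($j_0$, or $j_0$ and $j_0{+}1$) are handled, the overflow they generate, added to the direct arrivals at their neighbors, keeps those neighbors strictly below capacity, and then that those neighbors generate no overflow at all (their scaled occupancy being $\lambda_j/\mu_j<c_j$), so that the argument closes around the cycle without any feedback. This is where the inequalities in the hypotheses (especially the two inequalities in case~(2) and the pair~\eqref{F1} in case~(3)) are used, and making the induction precise — together with arguing that a $G/M/C^N/C^N$ queue fed by an ergodic arrival stream of intensity strictly below $\mu c$ has scaled occupancy converging to that intensity rather than to $c$, i.e.\ vanishing loss — requires the same technical input as Theorem~\ref{T1} but applied $J$ times. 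Everything else is a routine transcription of the two-center proofs, which is presumably why the authors state it as a proposition whose proof can be compressed.
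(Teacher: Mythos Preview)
Your proposal is correct and is precisely the elaboration the paper has in mind: the authors do not give a proof at all, stating only that ``the proof is similar to the proof of the previous proposition and is therefore omitted.'' Your localization/coupling argument, reducing each case to the appropriate $M/M/C^N/C^N$ or $G/M/C^N/C^N$ bound and then invoking the Hunt--Kurtz averaging and the martingale computation of the rejection rate, is exactly that adaptation.

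One small clarification: your parenthetical ``with the factor $2p$ replacing $p$'' in case~(3) is slightly misleading. The limiting random walk $(\overline{m}_c^{j_0}(t))$ for the pair $(j_0,j_0{+}1)$ has transition rates involving $p_{j_0}$ and $p_{j_0+1}$, not $2p_{j_0}$ and $2p_{j_0+1}$ --- only the fraction routed \emph{toward the partner} enters the coupled walk, and that is $p_{j_0}$, identical to the two-center model. The factor $2$ appears only in the final loss formula, where $(1-p_i)$ becomes $(1-2p_{j_0})$, because an additional fraction $p_{j_0}$ of the overflow at $j_0$ is routed to the \emph{other} neighbor $j_0{-}1$, which under~\eqref{F1} is underloaded and accepts it with probability one in the limit. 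You get this right in your third paragraph, so this is just a wording issue in the second, not a gap.
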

The proof is similar to the proof of the previous proposition and is therefore omitted. Note that Condition~\eqref{F1} implies that the nodes with index $j_0-1$ and $j_0+2$ are underloaded, so that only nodes with index $j_0$ and $j_0+1$ are congested.  This result covers  partially the set of various possibilities but, as long as only two neighboring nodes are congested, it can be extended quite easily to the case where only pairs of nodes are congested. 

When there are at least three neighboring congested nodes, this method does not apply. It occurs when one of the conditions~($E_{j_0}$) holds and one of the conditions of~\eqref{F1} is not satisfied. One has to consider the invariant distributions of a three dimensional random walk in $\N^3$ for which there are scarce results. Nevertheless this situation should be, in practice, unlikely if the fog computing architecture is conveniently designed so that a local congestion can be solved by using the neighboring resources.

This proposition shows that the evaluation of the performances of the offloading algorithm can be expressed in terms of the invariant distributions of the random walks $(\overline{m}_c(t))$ introduced in Definition~\ref{defim}. The rest of the paper is devoted to the analysis of these invariant distributions when they exist. In particular, we will derive an explicit expression of  the blocking probabilities $\beta_i$ at facility \# $i$.}

\section{Characteristics of the limiting random walk}\label{LimitRW}

\subsection{Fundamental equations}
Throughout this section, we assume that the first condition of~(E) holds. Let $m_{c,1}$ and $m_{c,2}$ denote the abscissa and the ordinate of the random walk $(\overline{m}_c(t))$ in the stationary regime. Under stability condition~(E), it is shown in \cite{FayolleIas} that the generating function of the stationary numbers $m_{c,1}$ and $m_{c,2}$, defined by $P(x,y)=\E(x^{m_{c,1}}y^{m_{c,2}})$ for complex $x$ and $y$ such that $|x|\leq 1$ and $|y|\leq 1$, satisfies the functional equation
\begin{equation}
\label{eqfund}
h_1(x,y) P(x,y) = h_2(x,y)P(x,0) + h_3(x,y)P(0,y)+h_4(x,y)\pi_c(0,0),
\end{equation}
with $\pi_c(0,0)$ standing for $P(0,0)$ and
\begin{align*}
	h_1(x,y) &= -\mu_1 c_1 x^2 y -\mu_2 c_2 x y^2 + (\lambda_1+\lambda_2+\mu_1c_1+\mu_2c_2)x y -\lambda_1 y -\lambda_2 x,\\
	h_2(x,y) &= \lambda_2 \left((1-p_2)xy-x+p_2 y\right),\\
	h_3(x,y) &= \lambda_1 \left((1-p_1)xy-y+p_1 x\right),\\
	h_4(x,y) &= (\lambda_1p_1+\lambda_2p_2)xy-p_2 \lambda_2 y - p_1 \lambda_1 x.
\end{align*}
It is worth noting that 
\begin{equation} \label{h2h3h4}
	\lambda_1 p_1 (\lambda_1+\lambda_2p_2) h_2(x,y) +\lambda_2p_2 (\lambda_2+\lambda_1p_1) h_3(x,y) - \lambda_1\lambda_2(1-p_1p_2) h_4(x,y)=0.
\end{equation}

In \cite{FayolleIas,FIM}, it is shown how to compute the unknown functions by using the zeros of the kernel $h_1(x,y)$ and the results on Riemann-Hilbert problems. In the following we briefly describe how to achieve this goal. For the system under consideration, let us recall that the performance of the system is characterized by the blocking probabilities of the two classes of customers. For customers arriving at facility \#1, the blocking probability is given by
\begin{equation}\label{defbeta1}
	\beta_1 = P(0,1)(1-p_1)+p_1\pi_c(0,0)
\end{equation}
and that for customers arriving at the second facility by
\begin{equation}\label{defbeta2}
	\beta_2 = P(1,0)(1-p_2)+p_2\pi_c(0,0).
\end{equation}

By using the normalizing condition $P(1,1)=1$, we can easily show that $$\lambda_1 + \lambda_2 p_2 P(1,0) - \mu_1 c_1 = \lambda_1 P(0,1) + \lambda_2 p_2 \pi_c(0,0)$$ and $$\lambda_2 + \lambda_1 p_1 P(0,1) - \mu_2 c_2 = \lambda_2 P(1,0) + \lambda_1 p_1 \pi_c(0,0).$$

We then deduce that 
\begin{align}\label{P01}
P(0,1) = \frac{\lambda_1-\mu_1c_1+p_2(\lambda_2-\mu_2c_2)-p_2(\lambda_2+\lambda_1p_1)\pi_c(0,0)}{(1-p_1 p_2)\lambda_1}
\intertext{and}
P(1,0) = \frac{\lambda_2-\mu_2c_2+p_1(\lambda_1-\mu_1c_1)-p_1(\lambda_1+\lambda_2p_2)\pi_c(0,0)}{(1-p_1p_2)\lambda_2}.\label{P10}
\end{align}
The above relations show that the blocking probabilities $\beta_1$ and $\beta_2$ can be estimated as soon as the quantity $\pi_c(0,0)$ is known.

\subsection{Zero pairs of the kernel}

The kernel $h_1(x,y)$ has already been studied in \cite{FayolleIas} in the framework of coupled servers. For fixed $y$, the kernel $h_1(x,y)$ has two roots $X_0(y)$ and $X_1(y)$. By using the usual definition of the square root such that $\sqrt{a}>0$ for $a>0$, the solution which is null at the origin and denoted by $X_0(y)$, is defined and analytic in $\C\setminus ([y_1,y_2]\cup [y_3,y_4])$ where the reals $y_1, y_2, y_3$ and $y_4$ are such that $0<y_1<y_2<1<y_3<y_4$. The other solution $X_1(y)$ is meromorphic in $\C\setminus ([y_1,y_2]\cup [y_3,y_4])$ with a pole at 0. The function $X_0(y)$ is precisely defined by
$$ X_0(y) = \frac{-(\mu_2c_2y^2-(\lambda_1+\lambda_2+\mu_1c_1+\mu_2c_2)y+\lambda_2) +\sigma_1(y)}{2\mu_1c_1y}$$ with $$ \Delta_1(y) = (\mu_2 c_2 y^2 - (\lambda_1 + \lambda_2 + \mu_1 c_1 + \mu_2 c_2) y + \lambda_2)^2 - 4\mu_1 c_1 \lambda_1 y^2,$$
where $\sigma_1(y)$ is the analytic continuation in $\C\setminus ([y_1,y_2]\cup [y_3,y_4])$ of the function $\sqrt{\Delta_1(y)}$ defined in the neighborhood of 0. The other solution $X_1(y) ={\lambda_1}/{(\mu_1c_1X_0(y))}$.

When $y$ crosses the segment $[y_1,y_2]$, $X_0(y)$ and $X_1(y)$ describe the circle $C_{r_1}$ with center 0 and radius $r_1=\sqrt{{\lambda_1}/{(\mu_1c_1)}} > 1$, since from the first of condition of~(E), we have $\lambda_1 > \mu_1 c_1$.

Similarly, for fixed $x$, the kernel $K(x,y)$ has two roots $Y_0(x)$ and $Y_1(x)$. The root $Y_0(x)$, which is null at the origin, is analytic in $\C\setminus ([x_1,x_2]\cup [x_3,x_4])$ where the reals $x_1$, $x_2$, $x_3$ and $x_4$ are such that with $0<x_1<x_2<1<x_3<x_4$ and is given by
$$
Y_0(x) = \frac{-(\mu_1c_1x^2-(\lambda_1+\lambda_2+\mu_1c_1+\mu_2c_2)x+\lambda_1) +\sigma_2(x)}{2\mu_2c_2x}
$$
with
$$
\Delta_2(x) = (\mu_1c_1x^2-(\lambda_1+\lambda_2+\mu_1c_1+\mu_2c_2)x+\lambda_1)^2-4\mu_2c_2\lambda_2 x^2,
$$
where $\sigma_2(x)$ is the analytic continuation in  $\C\setminus ([x_1,x_2]\cup [x_3,x_4])$ of the function $\sqrt{\Delta_2(x)}$ defined in the neighborhood of 0.  The other root $Y_1(x) = {\lambda_2}/{(\mu_2c_2 Y_0(x))}$ and is meromorphic in $\C\setminus ([x_1,x_2]\cup [x_3,x_4])$ with a pole at the origin. 

When $x$ crosses the segment $[x_1,x_2]$, $Y_0(y)$ and $Y_1(y)$ describe the circle $C_{r_2}$ with center 0 and radius $r_2 = \sqrt{{\lambda_2}/{(\mu_2c_2)}}$.

\section{Boundary value problems}\label{BoundValue}

\subsection{Problem formulation}

In \cite{FIM}, it is proven that the functions $P(x,0)$ and $P(0,y)$ can be extended as meromorphic functions in $\C\setminus[x_3,x_4]$ and $\C\setminus[y_3,y_4]$, respectively. By using the fact that $X_0(y)$ and $X_1(y)$ are on circle $C_{r_1}$ for $y \in [y_1,y_2]$, we easily deduce that the function $P(x,0)$, analytic in $D_{r_1}$ (the disk with center 0 and radius $r_1$), is such that for $x \in C_{r_1}$ 
\begin{equation}
\label{RHPx}
\Re\left( i\frac{h_2(x,Y_0(x))}{h_3(x,Y_0(x))}P(x,0)\right) = \Im\left(\frac{h_4(x,Y_0(x))}{h_3(x,Y_0(x))}\pi_c(0,0) \right)
\end{equation}
where $Y_0(x) \in [y_1,y_2]$.

Similarly, the function $P(0,y)$ is analytic in $D_{r_2}$, which is the disk with center 0 and radius $r_2$, and for $x \in C_{r_2}$, we have
\begin{equation}
\label{RHPy}
\Re\left( i\frac{h_3(X_0(y),y)}{h_2(X_0(y),y)}P(0,y)\right) = \Im\left(\frac{h_4(X_0(y),y)}{h_2(X_0(y),y)}\pi_c(0,0) \right).
\end{equation}

By using Equation~\eqref{h2h3h4}, we have
$$
\Im\left(\frac{h_4(x,y)}{h_2(x,y)}\right)= -\frac{p_2(\lambda_2+\lambda_1p_1)}{\lambda_1(1-p_1p_2)} \Re\left(i\frac{h_3(x,y)}{h_2(x,y)}\right).
$$
Equation~\eqref{RHPy} can then be rewritten as
\begin{equation}
\label{RHPyb}
\Re\left( i\frac{h_3(X_0(y),y)}{h_2(X_0(y),y)}\left(P(0,y)+\frac{p_2(\lambda_2+\lambda_1p_1)}{\lambda_1(1-p_1p_2)}\pi_c(0,0)\right)\right) =0.
\end{equation}

Similarly, Equation~\eqref{RHPx} can be rewritten as
\begin{equation}
\label{RHPxb}
\Re\left( i\frac{h_2(x,Y_0(x))}{h_3(x,Y_0(x))}\left(P(x,0)+\frac{p_1(\lambda_1+\lambda_2p_2)}{\lambda_2(1-p_1p_2)}\pi_c(0,0)\right)\right) =0.
\end{equation}

Problem~\eqref{RHPxb} corresponds to Problem~(7.6) in \cite{FayolleIas} for which $i_1=i_2=i_3=0$ in the notation of that paper. The ratio ${h_2(x,Y_0(x))}/{h_3(x,Y_0(x))}$ corresponds to the function $J(x)$ in \cite{FayolleIas}.

In the following, we focus on Riemann-Hilbert problem~\eqref{RHPyb}. The analysis of problem~\eqref{RHPxb} is completely symmetrical. Moreover, to compute the blocking probabilities $\beta_1$ and $\beta_2$, we only need to compute the quantity $\pi_c(0,0)$.

\subsection{Problem resolution}

The function $P(0,y)$ is analytic in the open disk $D_{r_2}$. By using the reflection principle \cite{Cartan}, the function 
$$
y \mapsto \overline{P\left(0, {r^2_2}/{\overline{y}} \right)}
$$
is analytic on the outside of the closed disk $\overline{D_{r_2}}$. It is then easily checked that if we define 
\begin{equation}
\label{FYP0y}
F_Y(y) = 
\begin{cases}
\displaystyle P(0,y) +\frac{p_2(\lambda_2+\lambda_1p_1)}{\lambda_1(1-p_1p_2)}\pi_c(0,0), & y \in D_{r_2},\\ \\
\displaystyle \overline{P\left(0, {r_2^2}/{\overline{y}} \right)} +\frac{p_2(\lambda_2+\lambda_1p_1)}{\lambda_1(1-p_1p_2)}\pi_c(0,0), & y \in \C\setminus \overline{D_{r_2}},
\end{cases}
\end{equation}
the function $F_Y(y)$ is sectionally analytic with respect to the circle $C_{r_2}$,  the quantity  $F_Y(y)$ tends to $\pi_c(0,0){(\lambda_1+\lambda_2p_2)}/{(\lambda_1(1-p_1p_2))}$ when $y$ goes to infinity, and for $y \in C_{r_2}$
\begin{equation}
\label{Hilberty}
F^i_Y(y) = \alpha_Y(y)F^e_Y(y),
\end{equation}
where $F^i_Y(y)$ (resp. $F^e_Y(y)$) is the interior (resp. exterior) limit of the function $F_Y(y)$ at the circle $C_{r_2}$, and the function $\alpha_Y(y)$ is defined on $C_{r_2}$ by
\begin{equation}
\label{defalphayini}
\alpha_Y (y) = \frac{\overline{a_Y(y)}}{a_Y(y)}
\end{equation}
with
$$
a_Y(y) = \frac{h_3(X_0(y),y)}{h_2(X_0(y),y)}.
$$

The solutions to Riemann-Hilbert problems of form~\eqref{Hilberty} are given in \cite{Lions}. We first have to determine the index of the problem defined as
$$
\kappa_Y = \frac{1}{2\pi}\var_{y \in C{r_2}} \arg \alpha_Y(y) .
$$
In \cite[Theorem~7.2]{FayolleIas} it is shown that the stability condition~(E) is equivalent to $\kappa_Y=0$.

To obtain explicit expressions, let us first study the function $\alpha_Y(y)$, which can be expressed as follows. 

\begin{lemma}
The function $\alpha_Y(y)$ defined for $y \in C_{r_2}$ by Equation~\eqref{defalphayini} can be extended as a meromorphic function in $\C\setminus([y_1,y_2]\cup[y_3,y_4])$ by setting
\begin{equation}
\label{defalphaY}
\alpha_Y(y) = \frac{\lambda_2(1-p_1p_2)X_0(y) + yR_Y(X_0(y))}{ y(\mu_2c_2(1-p_1p_2)y X_0(y) + R_Y(X_0(y))) },
\end{equation}
where
\begin{equation}
\label{defRY}
R_Y(x) = p_1\mu_1c_1(1-p_2)x^2+(p_1p_2(\mu_1c_1+\mu_2c_2)-(1-p_2)(\lambda_2+\lambda_1p_1))x -p_2(\lambda_2+\lambda_1p_1).
\end{equation}
\end{lemma}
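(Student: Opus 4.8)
The plan is to produce an explicit meromorphic function on $\C\setminus([y_1,y_2]\cup[y_3,y_4])$ that coincides with $\alpha_Y$ on the circle $C_{r_2}$; since two meromorphic functions on this (connected) set that agree on $C_{r_2}$ must be equal, this is all that is needed. The starting point is the definition $\alpha_Y(y)=\overline{a_Y(y)}/a_Y(y)$ with $a_Y(y)=h_3(X_0(y),y)/h_2(X_0(y),y)$, and the idea is to rewrite $\overline{a_Y(y)}$, for $y\in C_{r_2}$, with no complex conjugation left in it, after which a rational — hence meromorphic — expression can be read off directly.

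Two elementary facts are used, both valid on $C_{r_2}$. First, since $r_2^2=\lambda_2/(\mu_2c_2)$, for $y\in C_{r_2}$ one has $\overline{y}=r_2^2/y=\lambda_2/(\mu_2c_2\,y)$. Second, for $y\in C_{r_2}$ one has $\overline{X_0(y)}=X_0(y)$: the point $X_0(y)$ is real, lying in the segment $[x_1,x_2]$, which is the counterpart of the fact recalled in Section~\ref{LimitRW} that $Y_0$ maps $[x_1,x_2]$ onto $C_{r_2}$, since $X_0$ and $Y_0$ are mutually inverse on the relevant sets. Because $h_2$ and $h_3$ have real coefficients, these two facts give, for $y\in C_{r_2}$,
$$\overline{a_Y(y)}=\frac{\overline{h_3(X_0(y),y)}}{\overline{h_2(X_0(y),y)}}=\frac{h_3\!\left(X_0(y),\dfrac{\lambda_2}{\mu_2c_2\,y}\right)}{h_2\!\left(X_0(y),\dfrac{\lambda_2}{\mu_2c_2\,y}\right)}.$$

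Hence, for $y\in C_{r_2}$,
$$\alpha_Y(y)=\frac{h_3\!\left(X_0(y),\dfrac{\lambda_2}{\mu_2c_2\,y}\right)h_2(X_0(y),y)}{h_2\!\left(X_0(y),\dfrac{\lambda_2}{\mu_2c_2\,y}\right)h_3(X_0(y),y)}.$$
Substituting the explicit polynomials $h_2$, $h_3$ and clearing denominators turns the right-hand side into an explicit rational function of $X_0(y)$ and $y$; since $X_0$ is analytic on $\C\setminus([y_1,y_2]\cup[y_3,y_4])$, this function is meromorphic there, and it agrees with $\alpha_Y$ on $C_{r_2}$. That is the required meromorphic extension.

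It then remains to bring this rational function into the shape~\eqref{defalphaY} with $R_Y$ as in~\eqref{defRY}. Both the expression obtained after the substitution and the right-hand side of~\eqref{defalphaY} simplify once the kernel identity $h_1(X_0(y),y)=0$ is used to express $\mu_2c_2\,X_0(y)\,y^2$ through terms of lower degree in $y$; after doing this on both sides and cancelling a common factor, the two coincide. I expect this verification — together with the bookkeeping needed to check that $C_{r_2}$ stays clear of the cuts $[y_1,y_2]\cup[y_3,y_4]$ and to keep the branch choices for $\sigma_1$ and $\sigma_2$ consistent — to be the only genuinely delicate point; the structural content of the lemma is entirely in the two facts about $C_{r_2}$ above, and above all in $\overline{X_0(y)}=X_0(y)$.
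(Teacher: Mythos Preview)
Your proposal is correct and follows essentially the same approach as the paper. The paper proceeds by directly expanding $h_3(x,\bar y)\,h_2(x,y)$ with $x=X_0(y)$, using $y\bar y=\lambda_2/(\mu_2c_2)$ and $h_1(x,y)=0$ to extract the common factor $(x-1)$ and reach the form with $R_Y$; your version makes explicit the fact $\overline{X_0(y)}=X_0(y)$ that the paper uses silently, and defers the algebra, but the ingredients and the route are the same.
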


\begin{proof}
We have for $(x,y)$ such that $y\in C_{r_2}$ and $x=X_0(y)$
\begin{multline*}
h_3(x,\bar{y})h_2(x,y) = \lambda_1\lambda_2 \left(((1-p_1)x-1)((1-p_2)x+p_2) y \bar{y} \right. \\ \left. -((1-p_1)x-1)x\bar{y}+p_1x((1-p_2)x+p_2)y -p_1x^2\right)
\end{multline*}
By using the fact that $y\bar{y}={\lambda_2}/{(\mu_2c_2)}$ and $h_1(x,y)=0$, we deduce that 
$$
h_3(x,\bar{y})h_2(x,y) = -\frac{\lambda_1\lambda_2(x-1)}{\mu_2c_2 y}\left(\lambda_2(1-p_1p_2)x+yR_Y(x)\right),
$$
where $R_Y(x)$ is defined by Equation~\eqref{defRY}, and the result follows.
\end{proof}

Since the index of the Riemann-Hilbert~\eqref{Hilberty} is null, the solution is as follows.

\begin{lemma}
\label{lemHilbert}
The solution to the Riemann-Hilbert problem~\eqref{Hilberty} exists and is unique and given for $y \in D_{r_2}$ by
\begin{equation}
\label{defFy}
F_Y(y) = \frac{\lambda_1+ \lambda_2p_2}{\lambda_1(1-p_1p_2)}\pi_c(0,0) \varphi_Y(y),
\end{equation}
where
\begin{equation}
\label{defvarphiy}
\varphi_Y(y) = \exp\left( \frac{y}{\pi} \int_{x_1}^{x_2}\frac{(\mu_1c_1 x^2-\lambda_1)\Theta_Y(x)}{xh_1(x,y)}\,\diff x \right)
\end{equation}
and
\begin{multline}
\label{defThetaY}
\Theta_Y(x) = \\ \atan\left(\frac{(1-p_1p_2)\sqrt{-\Delta_2(x)}}{(1-p_1p_2)(\mu_1c_1x^2-(\lambda_1+\lambda_2+\mu_1c_1+\mu_2c_2)x+\lambda_1)-2R_Y(x)} \right) .
\end{multline}
\end{lemma}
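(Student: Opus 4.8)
The plan is to treat \eqref{Hilberty} as a homogeneous Riemann--Hilbert problem of index $\kappa_Y=0$ (this is \cite[Theorem~7.2]{FayolleIas} under condition~(E)), to reduce it by taking logarithms to an additive (Hilbert) jump problem, to solve the latter by a Cauchy-type integral, and finally to transform that integral over the circle $C_{r_2}$ into the real integral \eqref{defvarphiy} via the change of variable attached to the kernel $h_1$. For uniqueness, I would use that $F_Y$ is sectionally analytic with respect to $C_{r_2}$ with no poles (inside $D_{r_2}$ because $P(0,\cdot)$ is analytic there, outside because $\overline{P(0,r_2^2/\overline y)}$ is analytic on $\C\setminus\overline{D_{r_2}}$ by the reflection principle) and tends to the finite nonzero limit $\pi_c(0,0)(\lambda_1+\lambda_2p_2)/(\lambda_1(1-p_1p_2))$ at infinity, as recorded after \eqref{FYP0y}; if there were two such solutions, their ratio would have trivial jump across $C_{r_2}$, would extend to an entire function, bounded and equal to $1$ at infinity, hence would be $\equiv 1$.

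For existence, since $\kappa_Y=0$ the function $\log\alpha_Y$ admits a single-valued continuous determination on $C_{r_2}$, and one checks from \eqref{defalphaY} that $|\alpha_Y|\equiv1$ there (write $\alpha_Y=\overline{g}/g$ with $g(y)=\lambda_2(1-p_1p_2)X_0(y)+yR_Y(X_0(y))$, using that $X_0(y)$ is real for $y\in C_{r_2}$ and $y\overline y=r_2^2$), so that $\log\alpha_Y$ is purely imaginary. Taking logarithms in \eqref{Hilberty} gives $\log F_Y^i-\log F_Y^e=\log\alpha_Y$ on $C_{r_2}$, whose bounded solution is
\[
\log F_Y(y)=\frac{1}{2\pi i}\int_{C_{r_2}}\frac{\log\alpha_Y(t)}{t-y}\,\diff t + C ,
\]
the constant $C$ being fixed by the value of $F_Y$ at infinity. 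This produces the prefactor of \eqref{defFy} and leaves $\varphi_Y(y)=\exp\!\big(\tfrac{1}{2\pi i}\int_{C_{r_2}}\tfrac{\log\alpha_Y(t)}{t-y}\,\diff t\big)$.

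The core of the argument is to match this with \eqref{defvarphiy}. I would parametrize $C_{r_2}$ by its two arcs $\{Y_0(x):x\in[x_1,x_2]\}$ and $\{Y_1(x)=\overline{Y_0(x)}:x\in[x_1,x_2]\}$, split the Cauchy integral over them, and recombine the two contributions using $h_1(x,y)=-\mu_2c_2\,x\,(y-Y_0(x))(y-Y_1(x))$, the conjugation symmetry $\alpha_Y(\overline y)=\overline{\alpha_Y(y)}$ (which, with $|\alpha_Y|\equiv1$, gives $\log\alpha_Y(Y_1(x))=-\log\alpha_Y(Y_0(x))$), and the identities $Y_0(x)Y_1(x)=\lambda_2/(\mu_2c_2)$ and $(Y_0+Y_1)'(x)=-(\mu_1c_1x^2-\lambda_1)/(\mu_2c_2x^2)$; this is what produces the weight $(\mu_1c_1x^2-\lambda_1)/x$ and the prefactor $y/\pi$. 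Along the correspondence $y\leftrightarrow x$, substituting $x=X_0(y)$ in \eqref{defalphaY} and using $2\mu_2c_2xY_0(x)=-(\mu_1c_1x^2-(\lambda_1+\lambda_2+\mu_1c_1+\mu_2c_2)x+\lambda_1)+\sigma_2(x)$ with $\sigma_2(x)=i\sqrt{-\Delta_2(x)}$ on $(x_1,x_2)$, one identifies $\tfrac{1}{2i}\log\alpha_Y(y)$ with the quantity $\Theta_Y(x)$ of \eqref{defThetaY}. Finally I would verify directly that \eqref{defFy}--\eqref{defThetaY} is analytic and zero-free on $D_{r_2}$ (for $x\in[x_1,x_2]$ the roots $Y_0(x),Y_1(x)$ of $y\mapsto h_1(x,y)$ lie on $C_{r_2}$, so $h_1(x,y)\neq0$ for $y\in D_{r_2}$), that $\varphi_Y(0)=1$ and $\varphi_Y(y)\to1$ as $y\to\infty$, and that the Plemelj jump of the defining integral across $C_{r_2}$ reproduces $\log\alpha_Y$; uniqueness then finishes the proof.

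The delicate point is entirely this last step: fixing the determination of $\sqrt{\Delta_2}$ (hence of $\sigma_2$ and of $\Theta_Y$) consistently on and near $[x_1,x_2]$, choosing the branch of the argument so that $\Theta_Y$ is continuous and vanishes at the endpoints where $\Delta_2=0$, and orienting the two arcs of $C_{r_2}$ so that their contributions combine with the correct signs into the single real integral \eqref{defvarphiy} --- the same type of computation as the one carried out for the symmetric problem~\eqref{RHPxb} in \cite{FayolleIas}.
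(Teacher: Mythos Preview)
Your proposal is correct and follows the same route as the paper: index $\kappa_Y=0$, Cauchy-integral solution $F_Y(y)=\phi_Y\exp\bigl(\frac{1}{2\pi i}\int_{C_{r_2}}\frac{\log\alpha_Y(z)}{z-y}\,\diff z\bigr)$ with the constant fixed by the value at infinity, then the change of variables $z=Y_0(x)$, $x\in[x_1,x_2]$, to reduce the contour integral to the real integral~\eqref{defvarphiy}. The only organizational difference is in that last step: the paper computes $\diff Y_0/\diff x$ directly by implicit differentiation of $h_1(x,Y_0(x))=0$ and then takes imaginary parts for real $y$, whereas you combine the two conjugate arcs via the logarithmic derivative $\frac{Y_0'}{Y_0-y}+\frac{Y_1'}{Y_1-y}=\frac{\diff}{\diff x}\log\bigl((Y_0-y)(Y_1-y)\bigr)$ together with Vieta's relations $Y_0Y_1=\lambda_2/(\mu_2c_2)$ and $(Y_0+Y_1)'=-(\mu_1c_1x^2-\lambda_1)/(\mu_2c_2x^2)$ and the factorization $h_1(x,y)=-\mu_2c_2x(y-Y_0)(y-Y_1)$; both yield the same integrand, and your uniqueness argument via Liouville is the standard one implicit in the paper's citation of~\cite{Lions}.
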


\begin{proof}
Since the index of the Riemann-Hilbert~\eqref{Hilberty} is null, the solution reads \cite{Lions}
$$
F_Y(y) = \phi_Y(y) \exp\left(\frac{1}{2i\pi}\int_{C_{r_2}} \frac{\log\alpha_Y(z)}{z-y} \,\diff z \right)
$$
where the function $\alpha_Y(y)$ is defined by Equation~\eqref{defalphaY} and $\phi_Y(y)$ is a polynomial. Since we know that $F_Y(y)\to\pi_c(0,0){(\lambda_1+ \lambda_2p_2)}/{(\lambda_1(1-p_1p_2))}$ as $|y|\to \infty$, then
$$
\phi_Y(y) = \frac{\lambda_1+ \lambda_2p_2}{\lambda_1(1-p_1p_2)}\pi_c(0,0).
$$

Let for $y \in C_{r_2}$ and $y = Y_0(x+i0)$ for $x\in [x_1,x_2]$
$$
\Theta_Y(x) = \arg\left(-\mu_2c_2(1-p_1p_2)Y_0(x+0i) x - R_Y(x) \right)
$$
By using the expression of $Y_0(x)$, Equation~\eqref{defThetaY} follows. It is clear that 
$$
\log\alpha_Y(Y_0(x+0i)) = -2i\Theta_Y(x).
$$

Since $Y_0(x+0i) = \overline{Y_0(x-0i)}$, we have
\begin{align*}
\frac{1}{2i\pi}\int_{C_{r_2}} \frac{\log\alpha_Y(z)}{z-y} \,\diff z &=\frac{1}{\pi} \int_{x_1}^{x_2}\Im\left(\frac{\log\alpha_Y(Y_0(x+0i))}{Y_0(x+0i)-y}\frac{\diff Y_0}{\diff x}(x+0i) \right)\,\diff x \\
&=\frac{1}{\pi} \int_{x_1}^{x_2}\Im\left(\frac{-2i}{Y_0(x+0i)-y}\frac{\diff Y_0}{\diff x}(x+0i) \right)\Theta_Y(x)\,\diff x.
\end{align*}
It is easily checked from the equation $h_1(x,Y_0(x) )=0$ that
$$
\frac{\diff Y_0}{\diff x} = \frac{-2\mu_1c_1xY_0(x)-\mu_2c_2 Y_0(x)^2+(\lambda_1+\lambda_2+\mu_1c_1+\mu_2c_2)Y_0(x) -\lambda_2}{\mu_1c_1x^2+2\mu_2c_2 xY_0(x)-(\lambda_1+\lambda_2+\mu_1c_1+\mu_2c_2)x+\lambda_1}.
$$
For $x \in [x_1,x_2]$, we have
$$
 \mu_1c_1x^2+2\mu_2c_2 xY_0(x+0i)-(\lambda_1+\lambda_2+\mu_1c_1+\mu_2c_2)x+\lambda_1 =-i\sqrt{-\Delta_2 (x)}
$$
By using once again $h_1(x,Y_0(x+0i))=0$, we obtain for $x \in [x_1,x_2]$
$$
\frac{\diff Y_0}{\diff x}(x+0i) =\frac{(\lambda_1 -\mu_1c_1x^2)Y_0(x+0i)}{-ix\sqrt{-\Delta_2(x)}}
$$
and then for real $y$
$$
\Im\left(\frac{-2i}{Y_0(x+0i)-y}\frac{\diff Y_0}{\diff x}(x+0i) \right) = \frac{(\mu_1c_1x^2-\lambda_1)y}{xh_1(x,y)}.
$$
It follows that for real $y$
$$
\frac{1}{2i\pi}\int_{C_{r_2}} \frac{\log\alpha_Y(z)}{z-y} \,\diff z = \frac{y}{\pi} \int_{x_1}^{x_2}\frac{(\mu_1c_1x^2-\lambda_1 )\Theta_Y(x)}{xh_1(x,y)}\,\diff x
$$
It is easily checked that the function on the right hand side of the above equation can analytically be continued in the disk $D_{r_2}$ and the result follows. 
\end{proof}

In view of the above lemma, we can state the main result of this section.

\begin{theorem}
The function $P(0,y)$ can be defined as a meromorphic function in $\C\setminus[y_3,y_4]$ by setting 
\begin{equation}
\label{defP0y}
 {P(0{,}y)} {=}
\begin{cases}
\displaystyle \frac{\lambda_1+ \lambda_2p_2}{\lambda_1(1-p_1p_2)}\pi_c(0,0) \varphi_Y(y) - \frac{p_2(\lambda_2+\lambda_1p_1)}{\lambda_1(1-p_1p_2)} \pi_c(0,0), \; y \in D_{r_2}, \\ \\ 
\displaystyle\frac{\lambda_1{+} \lambda_2p_2}{\lambda_1(1{-}p_1p_2)}\pi_c(0{,}0) \alpha_Y(y)\varphi_Y(y) {-} \frac{p_2(\lambda_2{+}\lambda_1p_1)}{\lambda_1(1{-}p_1p_2)} \pi_c(0{,}0), \;y \in\C\setminus\overline{D_{r_2}},
\end{cases}
\end{equation}
where $\varphi_Y(y)$ is defined by Equation~\eqref{defvarphiy}.
\end{theorem}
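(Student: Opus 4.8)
The plan is to substitute the explicit solution of the Riemann--Hilbert problem provided by Lemma~\ref{lemHilbert} into the definition~\eqref{FYP0y} of $F_Y$, first inside the disk $D_{r_2}$ and then, by continuation across the circle $C_{r_2}$, outside of it. Inside $D_{r_2}$ there is nothing to do beyond unwinding notation: by~\eqref{FYP0y}, $F_Y(y)=P(0,y)+\frac{p_2(\lambda_2+\lambda_1p_1)}{\lambda_1(1-p_1p_2)}\pi_c(0,0)$ for $y\in D_{r_2}$, while Lemma~\ref{lemHilbert} asserts $F_Y(y)=\frac{\lambda_1+\lambda_2p_2}{\lambda_1(1-p_1p_2)}\pi_c(0,0)\,\varphi_Y(y)$, and subtracting the constant gives the first line of~\eqref{defP0y}. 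In particular $P(0,\cdot)$ agrees on $D_{r_2}$ with a function holomorphic there, and, being a power series, with a function holomorphic on all of $D_1$, so it makes sense to look for its continuation past $C_{r_2}$.

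For $y$ outside $\overline{D_{r_2}}$ the idea is to exploit the jump relation~\eqref{Hilberty}. First, from~\eqref{defvarphiy} one sees directly that $\varphi_Y$ is holomorphic and zero-free on $\C\setminus C_{r_2}$, since $\Theta_Y$ depends only on the integration variable and the only $y$-singularities of the integrand occur where $h_1(x,y)=0$ for some $x\in[x_1,x_2]$, i.e.\ for $y$ on the arcs described by $Y_0$ and $Y_1$ as $x$ runs through $[x_1,x_2]$, which together fill $C_{r_2}$. Repeating the contour-deformation computation from the proof of Lemma~\ref{lemHilbert} for real $y$ with $|y|>r_2$ and extending by holomorphy identifies $\varphi_Y$ on the exterior with $\exp$ of the exterior boundary value of the Cauchy integral $\frac{1}{2i\pi}\int_{C_{r_2}}\frac{\log\alpha_Y(z)}{z-y}\,\diff z$; the Plemelj formulas then give $\varphi_Y^i=\alpha_Y\varphi_Y^e$ on $C_{r_2}$, which is exactly relation~\eqref{Hilberty} for $F_Y$ divided by the constant $\frac{\lambda_1+\lambda_2p_2}{\lambda_1(1-p_1p_2)}\pi_c(0,0)$. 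Using the meromorphic extension of $\alpha_Y$ to $\C\setminus([y_1,y_2]\cup[y_3,y_4])$ from the preceding lemma (formula~\eqref{defalphaY}), the candidate continuation of $P(0,\cdot)$ beyond $\overline{D_{r_2}}$ is then $\frac{\lambda_1+\lambda_2p_2}{\lambda_1(1-p_1p_2)}\pi_c(0,0)\,\alpha_Y(y)\varphi_Y(y)-\frac{p_2(\lambda_2+\lambda_1p_1)}{\lambda_1(1-p_1p_2)}\pi_c(0,0)$, the second line of~\eqref{defP0y}; note that it is the continuation of $F_Y^i$, not $F_Y^e$, that matters here, and the two differ precisely by the factor $\alpha_Y$.

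It remains to check that the two expressions glue into a single meromorphic function on $\C\setminus[y_3,y_4]$ restricting to $P(0,\cdot)$ on $D_1$. The jump relation just recalled says exactly that the interior boundary value on $C_{r_2}$ of the inside expression equals the exterior boundary value of the outside one, so Morera's theorem (the gluing principle across an analytic arc) yields holomorphy across $C_{r_2}$, and uniqueness of analytic continuation forces agreement with $P(0,\cdot)$ on the annulus $r_2<|y|<1$, hence on $D_1$. The only cut that survives is $[y_3,y_4]$: outside $\overline{D_{r_2}}$ the function $\alpha_Y$ is meromorphic off $[y_3,y_4]$ and $\varphi_Y$ is holomorphic and zero-free, whereas the cut $[y_1,y_2]$ of $X_0$ inherited by $\alpha_Y$ lies inside $D_{r_2}$ (one checks $y_2<r_2$), where the formula for $P(0,y)$ involves neither $X_0$ nor $\alpha_Y$ and is holomorphic. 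I expect the delicate point to be the identification of $\varphi_Y$ with the correct one-sided Cauchy integral on each side of $C_{r_2}$ — being sure that it is $\alpha_Y$, and not $\alpha_Y^{-1}$ or the reflected branch $\overline{P(0,r_2^2/\overline{y})}$ coming from~\eqref{FYP0y}, that multiplies $\varphi_Y$ in the exterior formula — together with the orientation bookkeeping in the Plemelj jump; once that sign is pinned down the remaining arguments are routine.
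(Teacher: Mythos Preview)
Your proposal is correct and follows essentially the same approach as the paper: identify $P(0,\cdot)$ inside $D_{r_2}$ via Lemma~\ref{lemHilbert} and~\eqref{FYP0y}, then extend across $C_{r_2}$ by the Plemelj jump of the Cauchy integral, which accounts for the extra factor $\alpha_Y$ outside. The paper's proof is terser---it simply states that analyticity of $\log\alpha_Y$ near $C_{r_2}$ allows one to continue $\exp$ of the Cauchy integral by multiplying by $\alpha_Y$ in the exterior---whereas you spell out the gluing via Morera, the location of the cuts ($[y_1,y_2]\subset D_{r_2}$, so only $[y_3,y_4]$ survives), and the distinction between continuing $F_Y^i$ versus the reflected branch $F_Y^e$; but the underlying mechanism is identical.
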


\begin{proof}Since the solution to the Riemann-Hilbert problem~\eqref{RHPyb} is unique, the function $P(0,y)$ coincides with the function $$F_Y(y)+ \frac{p_2(\lambda_2+\lambda_1p_1)}{\lambda_1(1-p_1p_2)}\pi_c(0,0)$$ in $D_{r_2}$. We can extend this function as follows. Noting that the function $\log\alpha_Y(y)$ is analytic in a neighborhood of the circle $C_{r_2}$, the function 
$$
y \mapsto \exp\left( \frac{1}{2i\pi}\int_{C_{r_2}} \frac{\log\alpha_Y(z)}{z-y} \,\diff z \right)
$$
defined for $y \in D_{r_2}$ can be continued as a meromorphic function in $\C\setminus[x_3,x_4]$ by considering the function defined for $y \in \C\setminus\overline{D_{r_2}},$ by
\begin{equation*}
\alpha_Y(y) \exp\left(\frac{1}{2i\pi}\int_{C_{r_2}} \frac{\log\alpha_Y(z)}{z-y} \,\diff z \right) =  \alpha_Y(y) \exp\left( \frac{y}{\pi} \int_{x_1}^{x_2}\frac{(\mu_1 c_1 x^2-\lambda_1)\Theta_Y(x)}{xh_1(x,y)}\,\diff x \right),
\end{equation*}
where the last equality is obtained by using the same arguments as above (consider first real $y$ and then extend the function by analytic continuation).
\end{proof}

For the system under consideration, let us recall that the performance of the system is characterized by the blocking probabilities of the two classes of customers. The following theorem summarizes the main results of the paper for Condition~(E). Proposition~\ref{othercases} covers the other cases. 
\begin{theorem}\label{TheoLoss}
Under Condition~(E), as $N$ goes to infinity, the probability that at equilibrium a job of facility \#$i$, $i\in\{1,2\}$ is rejected converges to $\beta_i$ with 
\begin{align*}
\displaystyle\beta_1 = \frac{(\lambda_1{-}\mu_1c_1{+}p_{2}(\lambda_{2}{-}\mu_{2}c_{2}){-}p_{2}(\lambda_{2}{+}\lambda_1p_1)\pi_c(0,0))(1{-}p_1)}{\lambda_1(1{-}p_1p_{2})}{+}p_1\pi_c(0,0) \\
\displaystyle\beta_2= \frac{(\lambda_2{-}\mu_2c_2{+}p_{1}(\lambda_{1}{-}\mu_{1}c_{1}){-}p_{1}(\lambda_{1}{+}\lambda_2p_2)\pi_c(0,0))(1{-}p_2)}{\lambda_2(1{-}p_1p_{2})}{+}p_2\pi_c(0,0)
\end{align*}
and the quantity $\pi_c(0,0)$ is given by
\begin{equation} \label{P00}
\pi_c(0,0) = 
\begin{cases}
\displaystyle\frac{ \lambda_1+\lambda_2p_2-(\mu_1c_1+\mu_2c_2p_2)}{\displaystyle (\lambda_1+\lambda_2p_2)\varphi_Y(1)} & \mbox{if } \lambda_2> \mu_2c_2,\\ \\
\displaystyle\frac{ \lambda_2+\lambda_1p_1-\mu_1c_1p_1-\mu_2c_2}{\displaystyle p_1 (\lambda_1+\lambda_2p_2)\varphi_Y(1)} & \mbox{if } \lambda_2< \mu_2c_2,
\end{cases}
\end{equation}
where $\varphi_Y(y)$ is defined by Equation~\eqref{defvarphiy}.
\end{theorem}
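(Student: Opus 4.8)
The plan is to assemble Theorem~\ref{TheoLoss} from the pieces already established, with the only genuinely new work being the evaluation of $\pi_c(0,0)$ via the normalization $P(1,1)=1$.

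\textbf{Step 1: the formulas for $\beta_1,\beta_2$.} These are immediate. Equations~\eqref{defbeta1}--\eqref{defbeta2} express $\beta_i$ in terms of $P(0,1)$, $P(1,0)$ and $\pi_c(0,0)$, and Equations~\eqref{P01}--\eqref{P10} give $P(0,1)$ and $P(1,0)$ explicitly as affine functions of $\pi_c(0,0)$. Substituting the latter into the former yields exactly the displayed expressions for $\beta_1$ and $\beta_2$. So the content of the theorem reduces entirely to computing $\pi_c(0,0)$.

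\textbf{Step 2: evaluating $\pi_c(0,0)$.} Here I would use the representation of $P(0,y)$ from the previous theorem, Equation~\eqref{defP0y}, specialized at $y=1\in D_{r_2}$ (recall $r_2=\sqrt{\lambda_2/(\mu_2 c_2)}$, so $1\in D_{r_2}$ precisely when $\lambda_2>\mu_2c_2$, which is the first case). This gives
$$
P(0,1) = \frac{\lambda_1+\lambda_2 p_2}{\lambda_1(1-p_1p_2)}\pi_c(0,0)\,\varphi_Y(1) - \frac{p_2(\lambda_2+\lambda_1 p_1)}{\lambda_1(1-p_1p_2)}\pi_c(0,0).
$$
On the other hand $P(0,1)$ is \emph{also} given by Equation~\eqref{P01}. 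Equating the two expressions for $P(0,1)$ gives one linear equation in the single unknown $\pi_c(0,0)$: the terms proportional to $\pi_c(0,0)$ that do not involve $\varphi_Y(1)$ cancel between the two sides (both carry the coefficient $-p_2(\lambda_2+\lambda_1 p_1)/(\lambda_1(1-p_1p_2))$), leaving
$$
\frac{\lambda_1-\mu_1c_1+p_2(\lambda_2-\mu_2c_2)}{(1-p_1p_2)\lambda_1} = \frac{\lambda_1+\lambda_2 p_2}{\lambda_1(1-p_1p_2)}\,\pi_c(0,0)\,\varphi_Y(1),
$$
and solving for $\pi_c(0,0)$ yields the first line of~\eqref{P00}. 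For the second case $\lambda_2<\mu_2c_2$, the point $y=1$ lies outside $\overline{D_{r_2}}$, so instead one uses the second branch of~\eqref{defP0y} (or, symmetrically, works with $P(x,0)$ at $x=1$ via the analogue of~\eqref{defP0y} coming from problem~\eqref{RHPxb}) together with~\eqref{P10}; the analogous cancellation and solve produces the second line of~\eqref{P00}. One should check $\varphi_Y(1)\neq 0$, which is clear from the definition~\eqref{defvarphiy} since $\varphi_Y(1)$ is an exponential of a finite real integral (the integrand is bounded on $[x_1,x_2]$, $h_1(x,1)\ne 0$ there).

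\textbf{Step 3: consistency.} Finally I would note that plugging the two values of $\pi_c(0,0)$ into the $\beta_i$ formulas from Step~1 completes the proof, and that in the overlapping boundary behaviour the two branches of~\eqref{P00} are consistent with the degenerate regimes of Proposition~\ref{othercases}. The main obstacle is not conceptual but bookkeeping: one must verify carefully that the $\pi_c(0,0)$-terms not multiplied by $\varphi_Y$ cancel exactly when equating~\eqref{defP0y} with~\eqref{P01} (this relies on the identity~\eqref{h2h3h4} that was built into the definition of $F_Y$), and one must confirm that $y=1$, respectively $x=1$, falls in the correct region $D_{r_2}$ or its complement according to the sign of $\lambda_2-\mu_2 c_2$, so that the right branch of the meromorphic continuation is used in each case.
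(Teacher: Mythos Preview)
Your Step~1 and the first case of Step~2 coincide with the paper's argument. The gap is in the second case $\lambda_2<\mu_2c_2$. When $y=1$ lies outside $\overline{D_{r_2}}$ and you invoke the exterior branch of~\eqref{defP0y}, the expression for $P(0,1)$ carries the extra factor $\alpha_Y(1)$, so after the same cancellation you obtain
\[
\pi_c(0,0)=\frac{\lambda_1+\lambda_2p_2-\mu_1c_1-\mu_2c_2p_2}{(\lambda_1+\lambda_2p_2)\,\alpha_Y(1)\,\varphi_Y(1)},
\]
not yet the second line of~\eqref{P00}. The step you are missing, and which the paper carries out, is the explicit evaluation of $\alpha_Y(1)$: since $h_1(1,1)=0$ one has $X_0(1)=1$, and substituting into~\eqref{defalphaY}--\eqref{defRY} yields
\[
\alpha_Y(1)=p_1\,\frac{\lambda_1+\lambda_2p_2-\mu_1c_1-\mu_2c_2p_2}{\lambda_2+\lambda_1p_1-\mu_1c_1p_1-\mu_2c_2},
\]
after which the stated formula follows.

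Your parenthetical alternative of working with $P(x,0)$ at $x=1$ via the symmetric problem~\eqref{RHPxb} is valid in principle, but it produces a formula in terms of the symmetric object $\varphi_X(1)$ rather than $\varphi_Y(1)$; it therefore does not give~\eqref{P00} as written without an additional identity relating $\varphi_X(1)$ to $\varphi_Y(1)$, and establishing that identity is essentially equivalent to the $\alpha_Y(1)$ computation above.
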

\begin{proof}
In the case $\lambda_2> \mu_2c_2$, the result easily follows by using Equation~\eqref{defP0y} for $y=1$ and Equation~\eqref{P01}.

In the case $\lambda_2 < \mu_2c_2$ (and then $\lambda_1>\mu_1c_1$ by Condition~(E)), we have $X_0(1) =1 $ and then, by Relation~\eqref{defalphaY}, one gets the expression for $\alpha_Y(1)$,
$$\alpha_Y(1) = p_1\frac{\lambda_1+\lambda_2p_2-\mu_1c_1-\mu_2c_2p_2}{\lambda_2+\lambda_1p_1-\mu_1c_1p_1-\mu_2c_2}.$$
Equation~\eqref{P00} then easily follows. The formulas for the blocking probabilities are obtained by using Relations~\eqref{defbeta1} and~\eqref{defbeta2} for $\beta_1$ and $\beta_2$ and the expressions~\eqref{P01} and~\eqref{P10} for $P(0,1)$ and $P(1,0)$. 
\end{proof}

To conclude this section, it is worth noting that the computation of the function $\varphi_Y(y)$ in the quantity $\pi_c(0,0)$ involves elliptic integrals. In addition, a similar result holds for the function $P(x,0)$. This enables us to completely compute the generating function $P(x,y)$.

\section{Numerical results: Offloading small data centers}\label{App}

In this section, we illustrate the results obtained in the previous sections (in particular Theorem~\ref{TheoLoss}) in order to estimate the gain achieved by the offloading scheme. We assume that the service rate at  both facilities is the same and taken equal to unity ($\mu_1=\mu_2=1$). Assume in addition that the first data center has a capacity much smaller than the second one, e.g., $c_1=1$ and $c_2=10$.

We consider the case when all the requests blocked at the first data center are forwarded to the second one ($p_1=1$) and none blocked at the second data center is forwarded to the first one ($p_2=0$). 

In Figures~\ref{ffig1} and~\ref{ffig2}, when the arrival rate $\lambda_1$ at the first data center increases, the loss rate $\beta_1$ goes from $0$ if $(A)$ or $(B_1)$ holds to a positive value if $(B_2)$ or  $(E)$ holds. For example in Figure~\ref{ffig1}, for $p_1=1$, we can see the transition from $(B_2)$ to $(E)$ when $\lambda_1=3$, and for $p_1=0.7$, the transition from $(B_2)$ to $(E)$ when $\lambda_1=1+2/0.7\simeq 3.85$. We can checked that $\beta_1$ is a continuous and not differentiable function of $\lambda_1$ at $1+2/0.7$. If $p_1=0.35$ or $p_1=0$, $(E)$ holds for the range of values $[1,5]$ considered here for $\lambda_1$.
In Figure~\ref{ffig1}, $\lambda_2=12$ thus $(E)$ holds for $\lambda_1\in [1,5]$, as $\lambda_1> \mu_1 c_1$ and $\lambda_2> \mu_2 c_2$.

In conclusion, Figures~\ref{ffig1} and~\ref{ffig2} show that the offloading mechanism improves a lot the loss rate $\beta_1$ of the requests of class $1$ and does not significantly deteriorate the corresponding performances at facility \#$2$ in the case of systematic rerouting ($p_1=1$), even when this data center is already significantly loaded as in Figure~\ref{ffig2}~(B). This means that offloading small date centers with a big back-up data center is a good strategy to reduce blocking in fog computing.

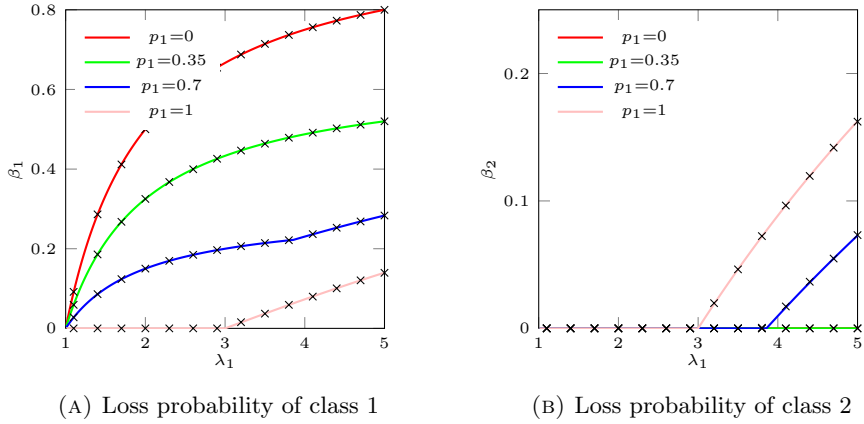
\begin{figure} [htbp]
        \begin{subfigure}[b]{0.46\textwidth}
                \begin{tikzpicture}
                 \begin{axis}[legend style={draw=none},
                                    font=\tiny,
                                       label style={font=\tiny},
      			           xlabel={$\lambda_1$},
			           xlabel style={yshift=+9pt},		
                                       ylabel= {$\beta_1$},
                                       ylabel style={yshift=-16pt},
                                       xmin=1, xmax=5, ymin=0.0, ymax=0.8,
                                      width=\linewidth,height=\linewidth,legend pos= north west]
                \addplot[thick, smooth,red] table[x=lbd1, y=beta1_p1_0.00] {ffig1.dat};
\addlegendentry{$p_1{=}0$}
\addplot[thick, smooth,green] table[x=lbd1, y=beta1_p1_0.35] {ffig1.dat};
\addlegendentry{$p_1{=}0.35$}
\addplot[thick, smooth,blue] table[x=lbd1, y=beta1_p1_0.70] {ffig1.dat};
\addlegendentry{$p_1{=}0.7$}
\addplot[thick, smooth,pink] table[x=lbd1, y=beta1_p1_1.00] {ffig1.dat};
\addlegendentry{$p_1{=}1$}
\addplot[only marks,black,mark=x] table[x=lbd1, y=simu_beta1_p1_0.00] {ffig1_simu.dat};
\addplot[only marks,black,mark=x] table[x=lbd1, y=simu_beta1_p1_0.35] {ffig1_simu.dat};
\addplot[only marks,black,mark=x] table[x=lbd1, y=simu_beta1_p1_0.70] {ffig1_simu.dat};
\addplot[only marks,black,mark=x] table[x=lbd1, y=simu_beta1_p1_1.00] {ffig1_simu.dat};
                \end{axis}
      \end{tikzpicture}
       \hspace{0.05\linewidth}
       \caption{ Loss probability of class $1$}
       \label{ffig1_a}
        \end{subfigure}
   \quad
  \begin{subfigure}[b]{0.46\textwidth}
                \begin{tikzpicture}
                 \begin{axis}[legend style={draw=none},
                                     font=\tiny,
                                       label style={font=\tiny},
      			           xlabel={$\lambda_1$},
			           xlabel style={yshift=+9pt},		
                                       ylabel= {$\beta_2$},
                                       ylabel style={yshift=-16pt},
                                       xmin=1, xmax=5, ymin=0., ymax=0.25,
                                      width=\linewidth,height=\linewidth,legend pos=  north west]
                \addplot[thick, smooth,red] table[x=lbd1, y=beta2_p1_0.00] {ffig1.dat};\addlegendentry{$p_1{=}0$}
\addplot[thick, smooth,green] table[x=lbd1, y=beta2_p1_0.35] {ffig1.dat};
\addlegendentry{$p_1{=}0.35$}
\addplot[thick, smooth,blue] table[x=lbd1, y=beta2_p1_0.70] {ffig1.dat};
\addlegendentry{$p_1{=}0.7$}
\addplot[thick, smooth,pink] table[x=lbd1, y=beta2_p1_1.00] {ffig1.dat};
\addlegendentry{$p_1{=}1$}
\addplot[only marks,black,mark=x] table[x=lbd1, y=simu_beta2_p1_0.00] {ffig1_simu.dat};
\addplot[only marks,black,mark=x] table[x=lbd1, y=simu_beta2_p1_0.35] {ffig1_simu.dat};
\addplot[only marks,black,mark=x] table[x=lbd1, y=simu_beta2_p1_0.70] {ffig1_simu.dat};
\addplot[only marks,black,mark=x] table[x=lbd1, y=simu_beta2_p1_1.00] {ffig1_simu.dat};
                \end{axis}
      \end{tikzpicture}
       \hspace{0.05\linewidth}
       \caption{Loss probability of class $2$}\label{ffig1_b}
       \end{subfigure}
\caption{Loss probabilities as a function of $\lambda_1$ with $\lambda_2{=}8$, $c_1{=}1$, $c_2{=}10$,  $\mu_1{=}1$, $\mu_2{=}1$, $p_2{=}0$. The crosses represent simulation points while solid curves are plotted from analytical results.}\label{ffig1}
\end{figure}
\begin{figure} [htbp]
   \centering      
        \begin{subfigure}[b]{0.46\textwidth}
                \centering
                \begin{tikzpicture}
                 \begin{axis}[legend style={draw=none},
                                     font=\tiny,
                                       label style={font=\tiny},
      			           xlabel={$\lambda_1$},
			           xlabel style={yshift=+9pt},		
                                       ylabel= {$\beta_1$},
                                       ylabel style={yshift=-16pt},
                                       xmin=1, xmax=5, ymin=0.0, ymax=0.8,
                                      width=\linewidth,height=\linewidth,legend pos=  north west]
                \addplot[thick, smooth,red] table[x=lbd1, y=beta1_p1_0.00] {ffig2.dat};
\addlegendentry{$p_1{=}0$}
\addplot[thick, smooth,green] table[x=lbd1, y=beta1_p1_0.35] {ffig2.dat};
\addlegendentry{$p_1{=}0.35$}
\addplot[thick, smooth,blue] table[x=lbd1, y=beta1_p1_0.70] {ffig2.dat};
\addlegendentry{$p_1{=}0.7$}
\addplot[thick, smooth,pink] table[x=lbd1, y=beta1_p1_1.00] {ffig2.dat};
\addlegendentry{$p_1{=}1$}
 \addplot[only marks,black,mark=x] table[x=lbd1, y=simu_beta1_p1_0.00] {ffig2_simu.dat};
\addplot[only marks,black,mark=x] table[x=lbd1, y=simu_beta1_p1_0.35] {ffig2_simu.dat};
\addplot[only marks,black,mark=x] table[x=lbd1, y=simu_beta1_p1_0.70] {ffig2_simu.dat};
\addplot[only marks,black,mark=x] table[x=lbd1, y=simu_beta1_p1_1.00] {ffig2_simu.dat};
                \end{axis}
      \end{tikzpicture}
       \hspace{0.05\linewidth}
       \caption{ Loss probability of class $1$}
       \label{ffig2_a}
        \end{subfigure}
   \quad
  \begin{subfigure}[b]{0.46\textwidth}
                \centering
                \begin{tikzpicture}
                 \begin{axis}[legend style={draw=none},
                                       font=\tiny,
                                       label style={font=\tiny},
      			           xlabel={$\lambda_1$},
			           xlabel style={yshift=+9pt},		
                                       ylabel= {$\beta_2$},
                                       ylabel style={yshift=-16pt},
                                       xmin=1, xmax=5, ymin=0.00, ymax=0.4,
                                      width=\linewidth,height=\linewidth,legend pos= north west]
                \addplot[thick, smooth,red] table[x=lbd1, y=beta2_p1_0.00] {ffig2.dat};\addlegendentry{$p_1{=}0$}
\addplot[thick, smooth,green] table[x=lbd1, y=beta2_p1_0.35] {ffig2.dat};
\addlegendentry{$p_1{=}0.35$}
\addplot[thick, smooth,blue] table[x=lbd1, y=beta2_p1_0.70] {ffig2.dat};
\addlegendentry{$p_1{=}0.7$}
\addplot[thick, smooth,pink] table[x=lbd1, y=beta2_p1_1.00] {ffig2.dat};
\addlegendentry{$p_1{=}1$}
\addplot[only marks,black,mark=x] table[x=lbd1, y=simu_beta2_p1_0.00] {ffig2_simu.dat};
\addplot[only marks,black,mark=x] table[x=lbd1, y=simu_beta2_p1_0.35] {ffig2_simu.dat};
\addplot[only marks,black,mark=x] table[x=lbd1, y=simu_beta2_p1_0.70] {ffig2_simu.dat};
\addplot[only marks,black,mark=x] table[x=lbd1, y=simu_beta2_p1_1.00] {ffig2_simu.dat};
                \end{axis}
      \end{tikzpicture}
       \hspace{0.05\linewidth}
       \caption{Loss probability of class $2$}\label{ffig2_b}
       \end{subfigure}
\caption{Loss probabilities as a function of $\lambda_1$ with $\lambda_2{=}12$, $c_1{=}1$, $c_2{=}10$,  $\mu_1{=}1$, $\mu_2{=}1$, $p_2{=}0$. The crosses represent simulation points while solid curves are plotted from analytical results.}\label{ffig2}
\end{figure}
\begin{figure} [htbp]
   \centering      
        \begin{subfigure}[b]{0.46\textwidth}
                \centering
                \begin{tikzpicture}
                 \begin{axis}[legend style={draw=none},
                                  font=\tiny,
                                       label style={font=\tiny},
      			           xlabel={$p_1$},
			           xlabel style={yshift=+9pt},		
                                       ylabel= {$\beta_1$},
                                       ylabel style={yshift=-16pt},
                                       xmin=0, xmax=1, ymin=0.0, ymax=0.18,
                                      width=\linewidth,height=\linewidth,legend pos= north east]
                \addplot[thick, smooth,red] table[x=p1, y=beta1_lbd2_9.90] {ffig3.dat};
\addlegendentry{$\lambda_2{=}9.9$}
\addplot[thick, smooth,green] table[x=p1, y=beta1_lbd2_11.00] {ffig3.dat};
\addlegendentry{$\lambda_2{=}11$}
                \end{axis}
      \end{tikzpicture}
       \hspace{0.05\linewidth}
       \caption{ Loss probability of class $1$}
       \label{ffig3_a}
        \end{subfigure}
   \quad
  \begin{subfigure}[b]{0.46\textwidth}
                \centering
                \begin{tikzpicture}
                 \begin{axis}[legend style={draw=none},
                                 font=\tiny,
                                       label style={font=\tiny},
      			           xlabel={$p_1$},
			           xlabel style={yshift=+9pt},		
                                       ylabel= {$\beta_2$},
                                       ylabel style={yshift=-16pt},
                                       xmin=0, xmax=1, ymin=0., ymax=0.18,
                                      width=\linewidth,height=\linewidth,legend pos= north west]
                \addplot[thick, smooth,green] table[x=p1, y=beta2_lbd2_9.90] {ffig3.dat};\addlegendentry{$\lambda_2{=}9.9$}
\addplot[thick, smooth,red] table[x=p1, y=beta2_lbd2_11.00] {ffig3.dat};
\addlegendentry{$\lambda_2{=}11$}
                \end{axis}
      \end{tikzpicture}
       \hspace{0.05\linewidth}
       \caption{Loss probability of class $2$}\label{ffig3_b}
       \end{subfigure}
\caption{Loss probabilities as a function of $p_1$ with $c_1{=}1$, $c_2{=}10$, $\lambda_1{=}1.2$,  $\mu_1{=}1$, $\mu_2{=}1$, $p_2{=}0$. }\label{ffig3}
\end{figure}
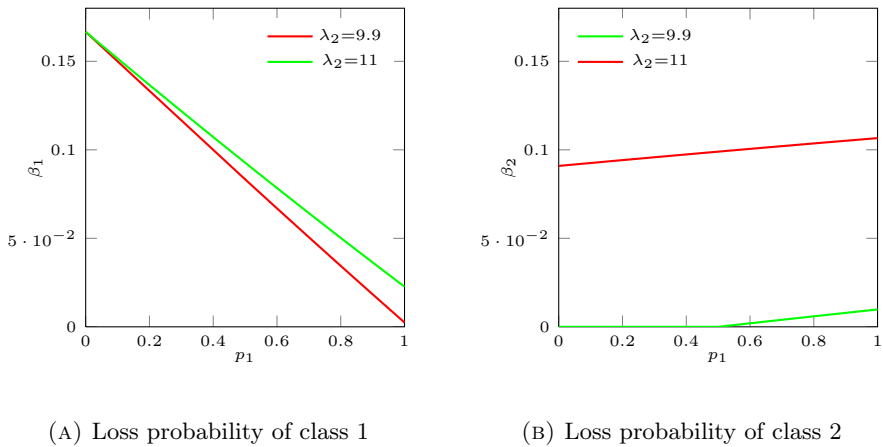

Figures~\ref{ffig3} illustrate the impact of the choice of $p_1$ when facility \#$2$ is almost overloaded, $\lambda_2=9.9$ so that $\lambda_2<c_2\mu_2$, and with a high load $\lambda_2=11.1$. As it can be seen, even when $p_1=1$, the performances of class $2$ requests are not really impacted by the offloading scheme, whereas the loss rate of class $1$ is significantly changed. This confirms the benefit of the offloading strategy. 

\section{Conclusion}
\label{conclusion}

We have proposed in this paper an analytical model to study a simple offloading strategy for data centers in the framework of fog computing under heavy loads. The strategy considered consists of forwarding with a certain probability requests blocked at a small data center to a big back-up data center. The model considered could also be used to study the offload of requests blocked at the big data center onto a small data center but this case has not been considered in the numerical applications. The key finding is that the proposed strategy can significantly improves blocking at a small data center without affecting too much blocking at the big data center. 

\providecommand{\bysame}{\leavevmode\hbox to3em{\hrulefill}\thinspace}
\providecommand{\MR}{\relax\ifhmode\unskip\space\fi MR }
\providecommand{\MRhref}[2]{%
  \href{http://www.ams.org/mathscinet-getitem?mr=#1}{#2}
}
\providecommand{\href}[2]{#2}

\end{document}